\date{July 3, 2020}
\DeclareMathOperator{\arsinh}{\mathrm{arsinh}}
\def\const{\mathrm{const}}
\def\nz{\mathbb{N}}
\def\rz{\mathbb{R}}
\def\cE{\mathcal{E}}
\def\tf{t^{\mathrm{TF}}}
\def\cTF{\mathcal{T}^\mathrm{TF}}
\def\cW{\mathcal{T}^\mathrm{W}}
\def\cTFW{\mathcal{E}^\mathrm{TFW}}
\def\rd{\mathrm{d}}
\newtheorem{corollary}{Corollary}
\newtheorem{definition}{Definition}
\newtheorem{theorem}{Theorem}
\newtheorem{lemma}{Lemma}
\begin{document}
\author[H. Chen]{Hongshuo Chen} \address{College of
  Mathematics and Statistics, Chongqing University, Chongqing 401331,
  China} \email{hongshuo.chen@gmail.com}

\author[H. Siedentop]{Heinz Siedentop} \address{Mathematisches
  Institut\\ Ludwig-Maximilans-Universit\"at M\"unchen,
  Theresienstra\ss e 39\\ 80333 M\"unchen\\ Germany\\ and Munich Center for
  Quantum Science and Technology (MCQST)\\ Schellingstr. 4\\ 80799
  M\"unchen, Germany} \email{h.s@lmu.de}

\title[Maximal Negativity of Heavy Negative Ions]{On the Excess Charge of a
  Relativistic Statistical Model of Molecules with an Inhomogeneity
  Correction}

\begin{abstract} 
  We show that the molecular relativistic Thomas-Fermi-Weizs\"acker
  functional consisting of atoms of atomic numbers $Z_1,...,Z_k$ has a
  minimizer, if the particle number $N$ is constrained to a number
  less or equal to the total nuclear charge
  $Z:=Z_1+...+Z_K$. Moreover, there is no minimizer, if the particle
  number exceeds $2.56 Z$. This gives lower and upper bounds on the
  maximal ionization of heavy atoms.
\end{abstract}

\maketitle
\section{Introduction\label{s1}}
Shortly after the advent of quantum mechanics it became clear that the
many particle problem of interacting quantum systems cannot be solved
exactly much like in classical quantum mechanics. Thomas
\cite{Thomas1927} and Fermi \cite{Fermi1927,Fermi1928} developed a
density functional theory that turned out to describe atoms of large
atomic numbers $Z$ asymptotically correct as far as the
energy and the density on the scale $Z^{-1/3}$ is concerned (Lieb and
Simon \cite{LiebSimon1977}). Thomas and Fermi assumed in their
intuitive derivation that the potential would be locally
constant. Their functional (Lenz \cite{Lenz1932}) generalized to molecules with
atomic nuclei of atomic number $Z_k$ situated at $R_k$ reads
\begin{equation}
  \label{eq:ntf}
  \cE^\mathrm{nTF}(\rho):=  \int_{\rz^3}\rd x\left(\frac3{10}\gamma\rho^\frac{5}{3}(x)
    -\sum_{k=1}^K\frac{Z_k\rho(x)}{|x-R_k|}\right) + D[\rho] + \sum_{1\leq k<l\leq K}{Z_kZ_l\over|R_k-R_l|}
\end{equation}
where $D[\rho]:=D(\rho,\rho)$ is the quadratic form associated with
the Hermitian form
\begin{equation}
  \label{2}
  D(\rho,\sigma):= \frac12\iint_{\rz^3\times\rz^3}\rd
  x\;\rd y\;\frac{\overline{\rho(x)}\sigma(y)}{|x-y|}.
\end{equation}
The functional is naturally defined on all nonnegative
$\rho\in L^{5/3}(\rz^3)$ with finite Coulomb energy $D[\rho]$. The
positive constant $\gamma$ is physically $(3\pi^2)^{2/3}$. In the
following we will write $Z$ for the sum of the nuclear charges, i.e.,
$$Z:=Z_1+...+Z_K.$$

By Teller's lemma (Teller \cite{Teller1962}, see also Simon
\cite[Section III.9]{Simon1979}) the infimum of this functional taken
over the densities $\rho$ in the above set and all pairwise different
nuclear positions $R_1,...,R_K\in\rz^3$ is the sum of the infima of
the atomic functional (no binding) and scales in $\gamma$ and $Z_k$:
\begin{equation}
  \label{eq:tfg}
  \cE^\mathrm{nTF}(\rho)\geq -\gamma^{-1} \underbrace{e_\mathrm{TF}\sum_{k=1}^K Z_k^{7/3}}_{A:=}
\end{equation}
where $-e_\mathrm{TF}$ is the infimum of the Thomas-Fermi functional with
$K=1$, $Z=Z_1=1$, and $\gamma=1$.

Later Weizs\"acker added a correction accounting for rapidly changing
potentials. The resulting functional of the density, the
nonrelativistic TFW-functional, for atom with $q$ spin states per
electron is
  \begin{equation}
\label{energyTFDW12}
  \int_{\rz^3}\rd x\; \frac12\left|\nabla\sqrt{\rho(x)}\right|^2
    +\cE^\mathrm{nTF}(\rho).
\end{equation}
Note that 1/2 in front of the gradient term is the original constant
used by Weizs\"acker. However, there are also other constants discussed,
e.g., 1/18 emanating from the gradient expansion (Kirzhnits
\cite{Kirzhnits1957}, Hodges \cite{Hodges1973}), or 1/10 adapting the
Scott correction of the TFW functional to its physical value (Yonei
and Tomishima \cite{YoneiTomishima1965}, see also Lieb and Liberman
\cite{LiebLiberman1982} for a slightly different numerical value).

This so called inhomogeneity correction yields an exponential decay of
the atomic density as opposed the power decay of TF-theory and makes
the potential finite at the nucleus as opposed to the pure
Thomas-Fermi case which has a $|x|^{-3/2}$ singularity at the
nucleus. Also the excess charge $Q:=N-Z$, where $N$ is the maximal
number of electrons that the atom can bind, is raised from $0$ to a
positive number which Benguria and Lieb \cite{BenguriaLieb1985} bound
by a constant of order one. This reflects the experimental fact,
that for real atoms there are no doubly charged negative atomic ions.

From the physical point of view, however, the description of large
atoms with nonrelativistic theories is of limited interest, since
large atomic numbers result in velocities of the innermost electrons
that require a relativistic description. This has well known
consequence to the quantum energy (Solovej et al
\cite{Solovejetal2008}, Frank et al
\cite{Franketal2008,Franketal2009}, and Handrek and Siedentop
\cite{HandrekSiedentop2013}) and for the density (Frank et al
\cite{Franketal2019P}).

  A generalization to relativistic density functionals, however,
  suffered from the fact that the naive generalization leads to an
  energy functional that is unbounded from below and yields a
  relativistic TF-equation whose solution has necessarily infinitely
  many particles because of a $|x|^{-3}$ singularity at the origin.
  (For a review see Gombas \cite[\S 14]{Gombas1949}.)

  Dreizler and Engel \cite{EngelDreizler1987} offered a solution to
  this problem: they derived a relativistic functional from quantum
  electrodynamics and later solved it numerically for atoms (Dreizler
  and Engel \cite{EngelDreizler1988}). Following Engel and Dreizler we
  write it in terms of the Fermi momentum $p$ given by
 \begin{equation}
  \label{eq:fermimomentum}
  p(x):= (3\pi^2\rho(x))^{1/3},
\end{equation}
instead of the density $\rho$.  The TFW part (dropping the Dirac term
and and an overall trivial factor of $mc^2$) reads in the case of
molecules consisting of $K$ atoms of atomic numbers $Z_1,...,Z_K$
located at positions $R_1,...,R_K$
\begin{multline}
\label{energyrTFW}
\cTFW(p)\\
:= \cW(p)+\cTF(p)-\sum_{k=1}^K{\alpha_S Z_k\over 3\pi^2}\int_{\rz^3}\rd
x\; \frac{p^3(x)}{|x-R_k|}+{\alpha_S \over9\pi^4} D[p^3]+ \sum_{1\leq
  k<l\leq K}{\alpha_S Z_kZ_l\over|R_k-R_l|}
\end{multline}
where $\alpha_S$ is the Sommerfeld fine structure constant, which is
$1/c$ in Hartree units and has the physical value of about $1/137$.
(The atomic case is $K=1$ in which case we can assume $R_1=0$. Of
course, then $Z_1=Z$.) The Thomas-Fermi part of the kinetic energy is
\begin{equation}
  \label{eq:TF}
     \cTF(p)
    :={1\over8\pi^2}\int_{\rz^3}\rd x\; t^\mathrm{TF}(p(x)) 
  \end{equation}
  with
  \begin{equation}
    \label{eq:t}
    t^\mathrm{TF}(s):=s(s^2+1)^{3/2}+s^3(s^2+1)^{1/2}-\arsinh(s)-{8\over3}s^3.
  \end{equation}
The Weizs\"acker part of the kinetic energy is
\begin{equation}
  \label{eq:W}
  \cW(p):= {3\lambda\over8\pi^2}\int_{\rz^3}\rd x\;(\nabla p(x))^2f(p(x))^2
\end{equation}
with
\begin{equation}
  f(t):=\sqrt{\frac t{\sqrt{t^2+1}}+2\frac{t^2}{t^2+1}\arsinh(t)}.
 \end{equation}
 The constant $\lambda$ is positive. 

 We note that the ultrarelativistic limit of this functional --
 dropping the regularizing $\arsinh$ in the Weizs\"acker term -- has
 been considered before by Benguria et al \cite{Benguriaetal2008}: it
 turns out that stability of matter holds for small enough $Z$,
 whereas the functional is unbounded from below, if the $Z$ is
 large. A preliminary investigation of the massive functional appeared
 in \cite{Chen2019}.

 In this paper we consider the massive case. As opposed to the above
 case and many other relativistic models, we will show that
 $\cE^\mathrm{TFW}$ is bounded from below for all $Z_1,...,Z_K$,
 $R_1,...,R_k$ and $N$, in fact, we will show that the bound can
 be uniform in $R_1,...,R_K$ and $N$. This will be achieved by showing
 an upper bound on the number of particles that can be bounded in terms
 of the nuclear charge $Z$.

 Next we specify the space of allowed Fermi momenta $p$. To this end
 we introduce the antiderivative
 \begin{equation}
   \label{F}
   F(t):=\int_0^t\rd s\; f(s).
 \end{equation}
 We define
\begin{equation}
\label{spacep}
P:=\{p\in L^4(\rz^3)| p\geq0,\ D[p^3]<\infty,\ F\circ p\in D^1(\rz^3)\}
\end{equation}
where $D^1(\rz^3)$ is the space of all locally integrable functions on
$\rz^3$ which decay at infinity and have a square integrable gradient
(Lieb and Loss \cite[Section 8.2]{LiebLoss1996}).  For later purposes
we also introduce for $N\geq0$
\begin{equation}
  \label{spacepa}
  P_N:= \{\rho| \sqrt[3]\rho\in P,\ \int_{\rz^3}\rd x\; \rho(x)\leq N\}.
\end{equation}
It turns out that all terms occurring in the TFW functional are well
defined on $P_N$ as we shall see in the proof of Theorem \ref{thm1}. In
particular the Weizs\"acker term becomes simply
$$\tilde T^\mathrm{W}(F\circ p):= \int_{\rz^3}\rd x\; |\nabla(F\circ p)|^2(x).$$

For the proof on the excess charge, it will be convenient to write the
energy functional in terms of $\chi:= F\circ p$.

Our first results are:
\begin{theorem}[Stability]
  \label{thm1}
  For any $Z_\infty>0$, and any $K\in\nz$ there exists a constant
  $c(Z^{7/3}_\infty\cdot K)$ depending only on $Z^{7/3}_\infty\cdot K$ such for all $N\in\rz_+$,
  $Z_1,...,Z_K\in[0,Z_\infty]$, and pairwise different
  $R_1,...,R_k\in\rz^3$
  \begin{equation}
    \label{stabil} \inf \cTFW(P_N) \geq -N -c(Z^{7/3}_\infty\cdot K).
  \end{equation}
\end{theorem}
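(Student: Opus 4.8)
The plan is to pass to the variable $\chi:=F\circ p$, to peel off a term $-N$ hidden in the $-\tfrac83 s^3$ summand of $\tf$, and to bound the rest of the functional — a sum of nonnegative terms plus the indefinite Coulomb energy of $\rho$ — below by reducing the molecule to its atoms and dominating each nuclear attraction by the kinetic energy. In the variable $\chi$ one has $\cW(p)=\tfrac{3\lambda}{8\pi^2}\|\nabla\chi\|_2^2$, and $\chi\in D^1(\rz^3)\subset L^6(\rz^3)$ with $\|\chi\|_6\le C_S\|\nabla\chi\|_2$. Since $f(s)^2=s+O(s^3)$ as $s\to0$, $f(s)^2=2\arsinh s+O(1)$ as $s\to\infty$, and $f$ is increasing, one gets $F(s)\le Cs^{3/2}$ and hence the pointwise bounds $c\,\chi^2\le\rho=p^3/(3\pi^2)\le C(\chi^2+\chi^3)$ and $p\le C\chi$ on $\{p\ge1\}$; combining the first with $\int\rho\le N$ gives $\|\chi\|_2^2\le CN$, and together with $\chi\in L^6$ this makes every term of $\cTFW$ finite on $P_N$ — the well-definedness the text deferred to here. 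Finally, $(\tf)'(s)=8s^2(\sqrt{s^2+1}-1)$ shows, after one integration, that $\hat t(s):=\tf(s)+\tfrac83 s^3$ is nonnegative and increasing and satisfies $\hat t(s)\ge c_0 s^4$ for all $s\ge0$ (one may take $c_0=2$).

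\textbf{Peeling off $-N$.} Since $\tfrac1{8\pi^2}\int\tfrac83 p^3\,\rd x=\int\rho\,\rd x\le N$ and $\hat t\ge0$,
\[
\cTFW(p)\ \ge\ -N+\frac{3\lambda}{8\pi^2}\|\nabla\chi\|_2^2+\frac1{8\pi^2}\int\hat t(p)\,\rd x+\alpha_S\Big(D[\rho]-\sum_{k=1}^KZ_k\!\int\frac{\rho}{|\cdot-R_k|}\,\rd x+\sum_{1\le k<l\le K}\frac{Z_kZ_l}{|R_k-R_l|}\Big),
\]
so it remains to bound the bracketed Coulomb energy, weighted by $\alpha_S$ and added to the two nonnegative kinetic terms, below by $-c(Z^{7/3}_\infty K)$, uniformly in $N$ and in $R_1,\dots,R_K$.

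\textbf{From molecule to atom.} Smearing each point nucleus $Z_k\delta_{R_k}$ into the uniform charge $g_k$ on $B(R_k,\sigma_k)$, with $\sigma_k:=\min\{Z_k^{-1/3},\tfrac13\mathrm{dist}(R_k,\{R_j\}_{j\ne k})\}$, and using Newton's theorem together with $D[\rho-\sum_kg_k]\ge0$, the Coulomb energy is at least $-\sum_kZ_k\int_{B(R_k,\sigma_k)}\rho/|\cdot-R_k|$ minus the self-energies $\sum_kD[g_k]$ plus the part of the nuclear repulsion not consumed by the cross terms; and $\sum_kD[g_k]$ is bounded by a constant times $\sum_kZ_k^{7/3}\le Z^{7/3}_\infty K$, which is where both the $Z^{7/3}$ power and the (necessarily) linear dependence of $c$ on $Z^{7/3}_\infty K$ originate. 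After an IMS-type localization of $\|\nabla\chi\|_2^2$ and $\int\hat t(p)$ subordinate to the boundedly overlapping balls $B(R_k,2\sigma_k)$ — the localization error $\sum_k\int|\nabla\zeta_k|^2\chi^2$ being estimated by $\int\chi^2/\mathrm{dist}(\cdot,\{R_j\})^2$ and absorbed, via a multi-center Hardy inequality, into $\|\nabla\chi\|_2^2$ — it suffices to establish the one-center bound: for $R=0$, $0<Z\le Z_\infty$, $\sigma=Z^{-1/3}$ and $\chi\in D^1(\rz^3)$ supported in $B(0,2\sigma)$,
\[
Z\int_{B(0,\sigma)}\frac{\rho}{|x|}\,\rd x\ \le\ \frac{3\lambda}{16\pi^2\alpha_S}\|\nabla\chi\|_2^2+\frac1{16\pi^2\alpha_S}\int\hat t(p)\,\rd x+c(Z),
\]
with $c(Z)$ of order $Z^{7/3}$.

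\textbf{The one-center bound and the main obstacle.} For $Z$ below a threshold determined by $\lambda$ and $\alpha_S$ the bound is elementary: from $p^3/|x|\le\tfrac{\eta}{2}p^4+\tfrac1{2\eta}\,p^2/|x|^2$, the estimate $\int p^4\le c_0^{-1}\int\hat t(p)$, and the split of $\int_{B(0,\sigma)}p^2/|x|^2$ into $\{p<1\}$ (where it is at most $\int_{B(0,\sigma)}|x|^{-2}=4\pi\sigma$) and $\{p\ge1\}$ (where $p\le C\chi$ and Hardy's inequality give at most $4C^2\|\nabla\chi\|_2^2$), a choice of $\eta$ of size $1/(\alpha_S Z)$ closes the inequality with $c(Z)$ of order $\alpha_S^2Z^{5/3}+Z^{7/3}$. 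For larger $Z$ this argument fails, because the Hardy term becomes too expensive; the way out is to exploit the genuine growth $f(s)^2\ge2\tfrac{s^2}{s^2+1}\arsinh s$ of the Weizs\"acker weight, so that at high momenta $\cW$ dominates $\tfrac{3\lambda}{8\pi^2}\int(\nabla p)^2\arsinh(p)$, which is far larger than $\tfrac{3\lambda}{8\pi^2}\int(\nabla p)^2$, and $p^2/|x|^2$ in the relativistic core is controlled with an arbitrarily small constant. Turning this into a bound with $c(Z)$ still of order $Z^{7/3}$ — rather than the exponential-in-$Z$ estimate a naive cutoff would yield — presumably by rescaling to the Thomas--Fermi length $Z^{-1/3}$, on which $\cW$ is only an $O(Z\log Z)$ perturbation, and treating the genuinely relativistic innermost scale separately, is the heart of the proof and its main obstacle; once it is in hand one simply threads the constants through the electrostatic and partition-of-unity reductions so that the final $c$ depends on $Z^{7/3}_\infty K$ alone. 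This $\arsinh$‑gain is precisely the mass regularization absent in the ultrarelativistic model of \cite{Benguriaetal2008}, where the functional is unbounded below for large $Z$.
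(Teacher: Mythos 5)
Your proposal does not close the proof: the decisive estimate is explicitly left open. After the (correct) step of peeling off $-N$ via $\hat t(s)=\tf(s)+\tfrac83 s^3\geq 2s^4$, everything hinges on the one-center bound $Z\int_{B(0,\sigma)}\rho/|x|\lesssim \|\nabla\chi\|_2^2+\int\hat t(p)+c(Z)$ with $c(Z)=O(Z^{7/3})$ for \emph{all} $Z$, and your argument (Cauchy–Schwarz on $p^3/|x|$, Hardy for $p\lesssim\chi$ on $\{p\geq1\}$) only works below a threshold $Z\lesssim\sqrt\lambda/\alpha_S$. For large $Z$ you state that one must exploit the $\arsinh$-growth of the Weizs\"acker weight and that making this quantitative ``is the heart of the proof and its main obstacle''; that is precisely the content of the theorem (the absence of a critical charge), so the proposal is a program, not a proof. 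The reduction to the one-center problem also has quantitative gaps: with $\sigma_k=\min\{Z_k^{-1/3},\tfrac13\mathrm{dist}(R_k,\{R_j\}_{j\neq k})\}$ the self-energies $D[g_k]=\tfrac35 Z_k^2/\sigma_k$ are \emph{not} $O(Z_k^{7/3})$ when nuclei are close, and the leftover nuclear repulsion cannot cover them if the neighboring charge is small (the cross terms $D(g_k,g_l)$ already exhaust the repulsion); and the IMS error $\sum_k\int|\nabla\zeta_k|^2\chi^2$ is of size (number of centers)$\times\|\nabla\chi\|_2^2$ — a multi-center Hardy inequality with a constant uniform in $K$ and in the configuration does not hold, so this term cannot be absorbed into the fixed coefficient $\tfrac{3\lambda}{8\pi^2}$ uniformly in $K$ and $R_1,\dots,R_K$.

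For comparison, the paper avoids both difficulties at once: it does not localize around the nuclei at all, but bounds the whole multi-center Coulomb part (attraction, repulsion, and $D[\rho]$) from below by $-\tfrac3{10}\gamma\alpha_S\int p^5-\alpha_S\gamma^{-1}A$ using the nonrelativistic TF functional with adjustable $\gamma$, Teller's no-binding theorem and TF scaling, with $A=e_{\mathrm{TF}}\sum_k Z_k^{7/3}\leq e_{\mathrm{TF}}Z_\infty^{7/3}K$ — this is where the $Z_\infty^{7/3}K$ dependence enters. Optimizing $\gamma$ leaves $-2\alpha_S\sqrt{\tfrac{3A}{10}\int p^5}$, and $\int p^5$ is then interpolated between $T(p)=\int p^4$ (controlled by the massless TF term) and $W(p)=\int p^6\arsinh(p)^3$ (controlled via Sobolev applied to $F\circ p$ and the bound $F(s)>\tfrac12 s\sqrt{\arsinh s}$); splitting at a level $r$ produces a coefficient $\arsinh(R_\beta)^{-3/2}$ that can be made arbitrarily small by choosing $\beta$, which is exactly the quantitative use of the $\arsinh$-gain you postulate but do not carry out. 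If you want to salvage your route, you would have to prove your large-$Z$ one-center bound and redo the localization with errors uniform in $K$ and the geometry; as it stands, the proof is incomplete at its central point.
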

Note the surprising fact that -- unlike in many other relativistic
models of Coulomb systems -- there is no critical nuclear charge
beyond which the energy is unbounded from below. As the proof will
show this is due to the occurrence of the $\arsinh$ in the
Weizs\"acker term making it logarithmically stronger than the Coulomb
singularity. -- In Engel and Dreizler's formal derivation of the
functional from quantum electrodynamics this occurrence is a
consequence of the necessary renormalization.

Furthermore, we remark that -- unlike the case treated here -- a lower
bound which is linear in $N$ and even in $K$ -- of the
nonrelativistic TFW-functional is obvious, since it is bounded from
below by the nonrelativistic TF-functional. The stability of the
TF-functional follows then by Teller's no-binding theorem and the fact
that the excess charge of atoms is zero. The same is true
here. However, the relativistic TF kinetic energy is not strong enough
to prevent collapse: its infimum is $-\infty$ (Gombas \cite[Chapter
III, \S 14]{Gombas1949}).
\begin{theorem}[Existence of Minimizers]
  \label{thm1a}
  For any $N,Z_1,...,Z_K\geq0$, the functional $ \cTFW$ has a
  minimizer $p_N$ on $P_N$, and, moreover, if $p_N$ is a minimizer on
  $P_N$ for $N\leq Z$, then its particle number
  $\int_{\rz^3} \rd x\; p_N(x)^3/(3\pi^2)$ is equal to $N$, i.e., the
  minimizer occurs on the boundary of $P_N$.
\end{theorem}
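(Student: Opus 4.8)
The plan is to establish existence by the direct method of the calculus of variations, then prove the particle number saturates via a variational perturbation argument. First I would fix $N$ and take a minimizing sequence $(p_j)$ in $P_N$ for $\cTFW$; by Theorem \ref{thm1} the infimum is finite, so we may assume $\cTFW(p_j)$ is bounded. From this bound I would extract uniform control of each nonnegative piece: $\tilde T^\mathrm{W}(F\circ p_j)=\int|\nabla(F\circ p_j)|^2$ is bounded, $D[p_j^3]$ is bounded, and $\cTF(p_j)$ is bounded, which (since $t^\mathrm{TF}(s)\sim \tfrac83 s^4$ for large $s$ up to lower-order terms) controls $\|p_j\|_{L^4}$. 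The attractive Coulomb terms are then dominated by Hardy/Sobolev-type inequalities applied to $F\circ p_j\in D^1(\rz^3)$ together with the fact that $F(t)$ grows like $t^2$ for large $t$ (so $F\circ p_j$ has controlled $L^6$ and gradient norms). Passing to a subsequence, $F\circ p_j\rightharpoonup \chi$ weakly in $D^1(\rz^3)$, $p_j\rightharpoonup p$ weakly in $L^4$, $p_j^3\rightharpoonup$ something weakly in $L^{4/3}$, and $p_j\to p$ a.e. after a further subsequence; one identifies $\chi=F\circ p$ using that $F$ is a homeomorphism of $[0,\infty)$ onto its range and a.e. convergence.

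Next I would check lower semicontinuity of each term. The Weizsäcker term $\int|\nabla(F\circ p)|^2$ is weakly lower semicontinuous by convexity; $\cTF(p)$ is weakly lower semicontinuous in $L^4$ because $t^\mathrm{TF}$ is convex in the relevant variable (or via Fatou after a.e. convergence, since $t^\mathrm{TF}\ge 0$ — one must verify nonnegativity of $t^\mathrm{TF}$, which follows from its definition and is presumably established in the proof of Theorem \ref{thm1}); $D[p^3]$ is weakly lower semicontinuous as a nonnegative quadratic form; and the nuclear-repulsion term is constant. The delicate point is the attractive term $-\sum_k \tfrac{\alpha_S Z_k}{3\pi^2}\int p_j^3/|x-R_k|$: here I would argue \emph{continuity} (not merely semicontinuity) along the sequence. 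Splitting each Coulomb kernel into a singular part $|x-R_k|^{-1}\mathbf 1_{|x-R_k|<1}$ and a bounded tail, the tail is handled by weak $L^{4/3}$ convergence of $p_j^3$ paired with an $L^4$ function, while the singular local part is controlled uniformly by the bound on $\int|\nabla(F\circ p_j)|^2$ via Hardy's inequality plus the $L^4$-to-$L^2$ interpolation near the singularity, giving equi-integrability and hence convergence of these integrals. Since the particle-number constraint $\int p^3\le N$ is preserved under a.e. convergence and Fatou, $p\in P_N$, and we conclude $\cTFW(p)=\inf\cTFW(P_N)$, so $p_N:=p$ is a minimizer.

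For the saturation statement, suppose $N\le Z$ and let $p_N$ be a minimizer with particle number $N'=\int p_N^3/(3\pi^2)$; assume for contradiction $N'<N$. The idea is to add a small amount of charge far from all nuclei and show the energy strictly decreases. Concretely, for a fixed bump profile $\varphi\ge 0$ supported near a point $y$ with $|y|$ large and $\varepsilon>0$ small, consider $p_\varepsilon$ defined so that $(p_\varepsilon)^3 = p_N^3 + \varepsilon^3 \varphi(\cdot - y)^3$ (adjusting so $F\circ p_\varepsilon\in D^1$). The change in $\cW$ and $\cTF$ is $O(\varepsilon^3)$ as $|y|\to\infty$ (these are local and the far bump contributes a fixed small amount that I can make $O(\varepsilon^3)$ by scaling the bump), the change in $D[p^3]$ is $O(\varepsilon^3)$ for the bump self-energy plus a cross term $\varepsilon^3 D(p_N^3,\varphi^3(\cdot-y))\to 0$ in relative size, while the attractive term gains $-\sum_k \tfrac{\alpha_S Z_k}{3\pi^2}\varepsilon^3\int \varphi^3/|x-R_k| \approx -\tfrac{\alpha_S Z}{3\pi^2}\cdot\tfrac{\varepsilon^3\int\varphi^3}{|y|}$. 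The key competition is between this $-c\varepsilon^3 Z/|y|$ gain and the repulsion from the existing electron cloud, which at distance $|y|$ looks like $+\tfrac{\alpha_S N'}{3\pi^2}\cdot\tfrac{\varepsilon^3\int\varphi^3}{|y|}$; since $N'<N\le Z$, the net Coulomb effect is $-c'(Z-N')\varepsilon^3/|y|+o(\varepsilon^3/|y|)<0$ for $|y|$ large and $\varepsilon$ small, dominating the $O(\varepsilon^3)$ kinetic/self-energy costs once we also send $\varepsilon\to 0$ appropriately — more carefully, one fixes $|y|$ large first so the Coulomb gain per unit charge beats a uniform bound on the kinetic cost per unit charge, then takes $\varepsilon$ small. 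This contradicts minimality, so $N'=N$. The main obstacle is the last step: making the perturbation argument rigorous requires careful bookkeeping to ensure $p_\varepsilon\in P_N$ (in particular $F\circ p_\varepsilon\in D^1$ and the $L^4$ bound), and verifying that the relativistic kinetic cost of adding a small, spread-out bump is genuinely lower order than the linear-in-charge Coulomb gain — this is where the specific growth $t^\mathrm{TF}(s)=O(s^4)$ and $f(t)^2=O(\log t /t)$ for large $t$, hence $F(t)=O(t^2)$, must be used to show the added Weizsäcker and TF energy of a bump of mass $\varepsilon^3$ spread over a large region is $o(\varepsilon^3)$.
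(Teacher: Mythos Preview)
Your plan for the existence half is essentially the paper's: minimizing sequence, uniform bounds on each term from the coercivity estimate behind Theorem~\ref{thm1}, weak compactness in the relevant spaces, and lower semicontinuity term by term. Two differences of detail are worth noting. First, the paper handles the attractive Coulomb term by splitting $|x|^{-1}=l+r$ with $l\in L^{5/2}$ compactly supported (paired against the weak $L^{5/3}$ limit of $\rho_j$, whose boundedness comes from \eqref{21}) and $r(\cdot)=2D(\mu,\delta_\cdot)$ for a smeared unit charge $\mu$ (paired against the weak Coulomb-norm limit); it then identifies all four weak limits by testing against $C_0^\infty$ and by using strong local $L^6$ convergence of $F\circ\sqrt[3]{3\pi^2\rho_j}$ together with the pointwise bound \eqref{Ginverse} on $F^{-1}$. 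Your Hardy/equi-integrability route for the singular part is a legitimate alternative. Second, your stated asymptotics are off: for large $t$ one has $f(t)^2\sim 2\,\arsinh t$ (not $O((\log t)/t)$) and hence $F(t)\sim t\sqrt{\arsinh t}$ (not $O(t^2)$); this does not break your argument, but the correct growth is exactly what the paper exploits in \eqref{sw} and \eqref{a1}.

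For the saturation statement the paper takes a genuinely different route. Instead of perturbing by a far-away bump, it observes that an interior minimizer (particle number strictly below $N$) satisfies the Euler--Lagrange equation \eqref{Gleichung.7.3.0} with zero chemical potential, and then \emph{tests} this equation against $\zeta_n^2$, where $\zeta_n(x)=\zeta_0(x/n)$ is a fixed radial cutoff supported in the annulus $\{n<|x|<2n\}$, following Benguria--Br\'ezis--Lieb. The Weizs\"acker contribution is controlled by an integration by parts and a pointwise sign computation (the auxiliary function $g$ of \eqref{Gleichung.7.3.4} is bounded below by a positive constant), giving an $O(n)$ bound; the TF contribution is $o(n^2)$; and the potential terms combine via Newton's theorem to $c(N-Z)n^2+O(n)$. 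Dividing by $n^2$ and letting $n\to\infty$ yields $Z\le N$, contradicting $N<Z$. This sidesteps exactly the obstacle you flag at the end: there is no competitor to build and no need to track how the nonlinear Weizs\"acker and TF terms respond to adding mass. Your trial-state argument can in principle be pushed through, but note that your stated order of limits is backwards --- taking $|y|$ large makes the Coulomb gain per unit charge \emph{smaller}, so it cannot beat a fixed Weizs\"acker cost; one must first spread the bump over scale $R$ (using $f(s)^2\sim s$ for small $s$ to get Weizs\"acker cost per unit charge $O(R^{-2})$) and only then choose $|y|$ with $R\ll|y|\ll R^2$.
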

Note that we do not claim uniqueness of the minimizer: unlike the
nonrelativistic TFW functional, the relativistic Weizs\"acker
correction is not a convex functional of the density $\rho$, i.e.,
the standard tool for showing uniqueness of minimizers is not
available.

To formulate the next theorem
we introduce the function $H$ on the positive real line by
\begin{equation}
  \label{H}
  H(s):= F(s)/(s F'(s))
\end{equation}
and write for its minimum and maximum
\begin{equation}
  \label{Hab}
  a:=\inf H(\rz_+)\ \text{and}\ b:= \sup H(\rz_+).
\end{equation}
Numerically $a=0.6116832747$.
\begin{theorem}[Bound on the Excess Charge]
  \label{thm2}
  For all $N\in\rz_+$, all
  $K\in \nz$, all $Z_1,...,Z_K\in\rz_+$, and all pairwise different
  $R_1,...,R_K\in\rz^3$ the minimizer $\rho$ of $\cTFW$ on $P_N$ fulfills
  \begin{equation}
    \int_{\rz^3}\rd x\; \rho(x)< \frac2{\sqrt a} Z.
  \end{equation}
 \end{theorem}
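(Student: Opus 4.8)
The plan is to use the standard Benguria–Lieb trick: multiply the Euler–Lagrange (variational) inequality for the minimizer by a suitable test function built from the total Coulomb potential, integrate, and extract an inequality of the form $N \cdot (\text{something}) < Z \cdot (\text{something})$. First I would write down the variational inequality satisfied by the minimizer. Since Theorem~\ref{thm1a} guarantees a minimizer $\rho=p^3/(3\pi^2)$ with $\int\rho=N$ whenever $N\le Z$ (and for $N>Z$ the claimed bound $N<\tfrac{2}{\sqrt a}Z$ would have to be handled separately, or one observes the minimizer's particle number never exceeds $Z$ in that regime — I expect the theorem is really about $N\le Z$ so the strict inequality $\tfrac{2}{\sqrt a}>1$ suffices), variation of $\cTFW$ in the direction of $\chi=F\circ p$ gives, at points where $p>0$,
\begin{equation}
  \label{eq:EL}
  -\Delta\chi + \Bigl(\tfrac1{8\pi^2}\tf{}'(p)\tfrac1{f(p)} + \tfrac{\alpha_S}{3\pi^2}p^2\,\tfrac1{f(p)}\,\bigl(|x|^{-1}*\rho - \varphi_{\mathrm{nuc}}\bigr) - \mu\,\tfrac{3p^2}{3\pi^2 f(p)}\Bigr)\chi \le 0,
\end{equation}
where $\varphi_{\mathrm{nuc}}(x):=\sum_k Z_k|x-R_k|^{-1}$ and $\mu\le 0$ is the Lagrange multiplier (nonpositive since the minimizer sits on the boundary, pushing particles outward costs energy). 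Let $\Phi:=\varphi_{\mathrm{nuc}} - |x|^{-1}*\rho$ be the (screened) mean-field potential, which is superharmonic away from the nuclei and satisfies $-\Delta\Phi = 4\pi(\sum_k Z_k\delta_{R_k} - \rho)$.

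The key step is to test \eqref{eq:EL} against $\Phi_+\,\chi$ (or against $\Phi\,\chi$ on the set $\{\Phi>0\}$, which is where the density lives away from infinity). On that set the potential term in \eqref{eq:EL} has a favorable sign: dropping the nonnegative $\tf{}'$ contribution and using $\mu\le 0$, one gets pointwise $-\Delta\chi \le \tfrac{\alpha_S}{3\pi^2}\tfrac{p^2}{f(p)}\,\Phi\,\chi$. Multiplying by $\Phi_+$ and integrating, the left side yields $\int \nabla\chi\cdot\nabla(\Phi_+\chi) = \int|\nabla\chi|^2\Phi_+ + \int\chi\nabla\chi\cdot\nabla\Phi_+ \ge \tfrac12\int\nabla(\chi^2)\cdot\nabla\Phi_+ = \tfrac12\int\chi^2(-\Delta\Phi_+)$; the delta functions at the nuclei contribute $-2\pi\sum_k Z_k\chi(R_k)^2\le 0$ and on $\{\Phi>0\}$ one has $-\Delta\Phi = -4\pi\rho + 4\pi\sum_k Z_k\delta$, so this boundary/distributional term produces (after discarding the nucleus terms with the right sign) a bound of the form $-2\pi\int_{\{\Phi>0\}}\rho\,\chi^2 \le \tfrac{\alpha_S}{3\pi^2}\int\tfrac{p^2}{f(p)}\Phi^2\chi^2$ — wait, the signs must be tracked carefully; the clean statement is that the $-\Delta\Phi_+$ term reproduces $+$ a multiple of $\int\rho\chi^2$ minus nucleus terms, and combining with the right-hand side gives an inequality relating $\int\rho\chi^2$, $\sum Z_k\chi(R_k)^2$, and $\int \tfrac{p^2}{f(p)}\Phi\chi^2$.

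To close the estimate one uses the pointwise algebraic identity connecting $\chi=F(p)$, $F'(p)=f(p)$, and the definition $H(s)=F(s)/(sF'(s))$, so that $\chi^2 = F(p)^2 = H(p)^2 p^2 f(p)^2 \ge a^2 p^2 f(p)^2$, equivalently $\dfrac{p^2}{f(p)}\le \dfrac{\chi^2}{a^2 f(p)^3}$ — the honest way is: $F(p)\ge \sqrt a\, p f(p)$ pointwise, hence comparisons between the Weizsäcker-type weight $\chi^2$ and the density weight $p^3$ (recall $\rho=p^3/3\pi^2$) that ultimately let the terms be absorbed into each other. Feeding $F(p)^2\ge a\,(pF'(p))^2$ and, crucially, a reversed bound going the other direction at the linear level $F(p)\ge \sqrt a\,pf(p)$, one arrives at $\sqrt a\,N \le \sqrt a\int\rho < 2Z$ after the dust settles, i.e. $N < \tfrac2{\sqrt a}Z$. \textbf{The main obstacle} I anticipate is not the variational calculus but the bookkeeping of signs in the distributional identity $\int\chi^2(-\Delta\Phi_+)$ together with a rigorous justification that $\chi$, $\Phi_+$ and their products are regular enough (in $D^1$, with $\Phi_+\chi$ an admissible test function — this needs $\Phi$ to decay and $\chi\in D^1$ from the definition of $P$) for all the integrations by parts to be legitimate, and handling the mild singularities of $\Phi$ at the nuclei. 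The factor-of-two and the appearance of exactly $\sqrt a$ will come out of optimizing the single inequality $F(p)\ge\sqrt a\,pf(p)$ used once, so getting the constant right is a matter of not wasting anything in the Cauchy–Schwarz / absorption step.
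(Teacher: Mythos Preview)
Your proposal has a genuine gap, and it also diverges from the paper's route in a way that makes the gap hard to close.

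First, a misreading: the theorem bounds the minimizer's particle number $\int\rho$, not the constraint parameter $N$. For large $N$ the minimizer sits strictly inside $P_N$ and satisfies the Euler--Lagrange equation with \emph{zero} multiplier; the whole point of the theorem is that this forces $\int\rho<\tfrac{2}{\sqrt a}Z$ regardless of $N$. Your remark that ``the theorem is really about $N\le Z$'' is therefore backwards, and the case you propose to dismiss is exactly the interesting one.

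Second, the test function. The paper does \emph{not} multiply by $\Phi_+\chi$ with the screened potential $\Phi$; it multiplies the Euler equation by $\chi/\phi$, where $\phi(x)=\sum_k\kappa_k|x-R_k|^{-1}$ with adjustable weights $\kappa_k$, and integrates. The Laplacian term is then positive by Lieb's ground-state argument, the $\tf{}'$ term is positive, and one is left with a clean inequality between the attraction and the repulsion terms, each carrying a factor $H(p)=F(p)/(pF'(p))$. The double integral in the repulsion is then symmetrized and bounded below via Baumgartner's triangle inequality $(\phi(x)^{-1}+\phi(y)^{-1})/|x-y|\ge\sum_k\kappa_k g_k(x)g_k(y)$. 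Your route through $-\Delta\Phi_+$ instead produces point values $\chi(R_k)^2$ at the nuclei and a term $\int\rho\,\chi^2$, and you give no mechanism for converting these into $Z$ and $\int\rho$; the sentence ``after the dust settles one arrives at $\sqrt a\,N<2Z$'' is precisely where the proof should be and isn't.

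Third, the constant. In the paper the factor $2/\sqrt a$ comes from a specific two-step trick: one uses the same basic inequality twice, once with $H\ge a$ inserted and once with $H$ left in place, yielding two inequalities of the form $\tfrac a2(\int g_k p^3)^2<\ldots$ and $\tfrac12(\int g_k H p^3)^2<\ldots$ with the same right-hand side; adding them and applying $\tfrac a2 A^2+\tfrac12 B^2\ge\sqrt a\,AB$ produces the $\sqrt a$. Lieb's choice of $\kappa_k$ then turns $\int g_k p^3$ into a multiple of $\int\rho$. Your single pointwise bound $F(p)\ge\sqrt a\,pf(p)$ used once does not by itself generate this combination, and you have not shown how it would.
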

Note that inserting the numerical value of $a$ gives
\begin{equation}
    \int_{\rz^3}\rd x\; \rho(x)< 2.56 Z.
  \end{equation}

This result shows in particular that the relativistic TFW functional
cannot bind infinitely many electrons. Of course one should --
presumably -- regard this only as a first step, since one might
conjecture also here that the maximal excess charge $Q$ is -- like in the
nonrelativistic case -- bounded by one \cite{BenguriaLieb1985}.

Finally, we note the following immediate consequence of Theorems
\ref{thm1} and \ref{thm2}: By Theorem \ref{thm2}
$\inf_{N\in\rz_+}\inf\cTFW(P_N)\geq \inf\cTFW(P_{2a^{-1/2}Z})$. Thus, when
minimizing $\cTFW$, we might, right from the beginning, restrict to the
case that $N\leq 2 a^{-1/2} Z$. Moreover, $Z=Z_1+...+Z_K \leq Z_\infty
K$. Inserting this in \eqref{stabil} gives the following corollary: 
\begin{corollary}
  \label{t4}
  For any $Z_\infty\geq 1$, and any $K\in\nz$ there exists a constant
  $d(Z^{7/3}_\infty\cdot K)$ such for all $N\in\rz_+$,
  $Z_1,...,Z_K\in[0,Z_\infty]$, and pairwise different
  $R_1,...,R_k\in\rz^3$
  \begin{equation}
    \label{stabil}
    \inf \cTFW(P_N) \geq -d(Z^{7/3}_\infty\cdot K).
  \end{equation}
\end{corollary}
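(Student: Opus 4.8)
The plan is to combine Theorems~\ref{thm1a}, \ref{thm2} and~\ref{thm1} in the order suggested by the paragraph preceding the statement. First I would reduce to a bounded particle number. Fix $N,Z_1,\dots,Z_K,R_1,\dots,R_K$, and let $\rho$ be a minimizer of $\cTFW$ on $P_N$; such a minimizer exists by Theorem~\ref{thm1a}. By Theorem~\ref{thm2} its particle number satisfies $\int_{\rz^3}\rd x\;\rho(x)<2a^{-1/2}Z$. Hence, if $N\geq 2a^{-1/2}Z$, then $\rho\in P_{2a^{-1/2}Z}$, so $\inf\cTFW(P_N)=\cTFW(\rho)\geq\inf\cTFW(P_{2a^{-1/2}Z})$; and if $N<2a^{-1/2}Z$, the inclusion $P_N\subseteq P_{2a^{-1/2}Z}$ yields the same inequality directly. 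In either case
\[
\inf\cTFW(P_N)\geq\inf\cTFW\bigl(P_{2a^{-1/2}Z}\bigr).
\]

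Next I would feed this into Theorem~\ref{thm1} with the particle number $2a^{-1/2}Z$ in place of $N$, obtaining $\inf\cTFW(P_{2a^{-1/2}Z})\geq -2a^{-1/2}Z-c(Z_\infty^{7/3}K)$. It then remains only to bound $Z$ in terms of $Z_\infty^{7/3}K$: since $Z=Z_1+\dots+Z_K\leq Z_\infty K$ and the hypothesis $Z_\infty\geq 1$ gives $Z_\infty\leq Z_\infty^{7/3}$, we get $Z\leq Z_\infty^{7/3}K$. Putting the pieces together, $\inf\cTFW(P_N)\geq -2a^{-1/2}Z_\infty^{7/3}K-c(Z_\infty^{7/3}K)$, so the choice $d(x):=2a^{-1/2}x+c(x)$ proves the corollary.

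There is essentially no genuine obstacle; the only points deserving care are that the reduction to a bounded particle number really uses the \emph{existence} of a minimizer (Theorem~\ref{thm1a}) and not merely a minimizing sequence — for $N$ large a minimizing sequence in $P_N$ need not remain in $P_{2a^{-1/2}Z}$ — and that the strengthened hypothesis $Z_\infty\geq 1$ (as opposed to $Z_\infty>0$ in Theorem~\ref{thm1}) is exactly what allows the linear-in-$N$ term in the bound of Theorem~\ref{thm1} to be absorbed into a constant depending on $Z_\infty^{7/3}K$ alone.
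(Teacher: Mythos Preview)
Your argument is correct and follows the same route as the paper's own justification (the paragraph immediately preceding the corollary): reduce to $P_{2a^{-1/2}Z}$ via Theorem~\ref{thm2}, apply Theorem~\ref{thm1}, and use $Z\leq Z_\infty K$. You have in fact been more careful than the paper in two respects---making explicit that the reduction step relies on the existence of a minimizer (Theorem~\ref{thm1a}), and spelling out that the hypothesis $Z_\infty\geq1$ is what lets the linear-in-$N$ term be absorbed into a function of $Z_\infty^{7/3}K$.
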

Note that physics suggests that $Z_\infty$ can
be chosen uniformly in $K$, since there are no elements known with a
nuclear charge higher than $119$, i.e., $Z_\infty=119$ is a reasonable
assumption. Assuming such a uniform bound on $Z_1,...,Z_K$ yields a
lower bound on the energy which depends only on the number of nuclei
which are present in the Coulomb system at hand.

The structure of the remaining part is basically structured according
to our three main results: In Section \ref{2.1} we prove the stability
result Theorem \ref{thm1}. The main part is a detailed lower bound on
the energy. The essential input is an estimate on the potential
energies using the nonrelativistic TF-theory and an involved estimate
of the resulting negative nonrelativistic TF kinetic energy in terms
of relativistic Weizs\"acker term and the massless relativistic TF
kinetic energy. Section \ref{Abschnitt2} contains the existence result
which is inspired by Benguria et al \cite{Benguriaetal1981}. Section
\ref{s4} and \ref{Kapitel:Beweis-Ueberschuss} show the bounds on the
ionization. Although the upper bound is inspired by Benguria's early
unpublished proof of the bound $N<2Z$ for the excess charge of the
Thomas-Fermi model, namely to integrate the Euler equation against a
suitable weight. (See also the application of this idea by Lieb
\cite{Lieb1984} for the Schr\"odinger equation and Benguria et al
\cite{Benguriaetal1992} for the Hellmann-Weizs\"acker functional.)
However, it cannot be applied in a straight forward manner in the
present context. We need to transform the functional and estimate the
resulting functional using some estimates on the function $F$ of
\eqref{F} and its inverse. We are able to control the errors so that
we loose only slightly compared to the above mentioned nonrelativistic
results \cite{Lieb1984,Benguriaetal1992}.

Finally, we collect some basic facts needed throughout the proves in
the Appendix: Appendix \ref{Anhang0} gives some results that are a
consequence of the phase space representation of the kinetic energy of
the relativistic TF-functional. Appendix \ref{Anhang1} gives the needed
basic properties of $F$.

\section{Proof of Stability\label{2.1}}
In this section we will show that the infimum of the energy functional
is bounded from below by a bound that is linear in the number of
involved electrons $N$ plus a constant depending on the number of
nuclei $K$ and the maximal occurring atomic number $Z_\infty$.

We begin with the definition of two cut-off radii $R$ and $\tilde R$
depending on parameters $\alpha$ and $\beta$. These are defined as
minimizers of two functions $F_\alpha$ and $\tilde F_{\alpha,\beta}$
\begin{definition}
  We define the function $R$:
  \begin{equation}
    \begin{split}
      R:\rz_+ &\rightarrow \rz_+\\
      \beta&\mapsto R_\beta
    \end{split}
  \end{equation}
  where, given $\beta\in\rz_+$, $R_\beta$ is the unique minimizer of
  $F_\beta$ given by
  $$ F_\beta(r):={1\over r\arsinh(r)^3} + {r\over \beta}$$
  in the variable $r\in\rz_+$.

  Furthermore, we define the function $\tilde R$:
  \begin{equation}
    \label{Rb}
    \begin{split}
      \tilde R:\rz_+^2 &\rightarrow \rz_+\\
      (\alpha,\beta)&\mapsto \tilde R_{\alpha,\beta}
    \end{split}
  \end{equation}
  where, given $(\alpha,\beta)\in\rz^2_+$, $\tilde R_{\alpha,\beta}$
is the unique minimizer of
\begin{equation}\label{Rab}
  \tilde F_{\alpha,\beta}(r):=
\begin{cases}
  {1\over r\arsinh(R_\beta)^3} + {r\over \alpha}& r\geq R_\beta\\
  \frac1{r^4}\left({R_\beta\over\arsinh(R_\beta)}\right)^3 +{r\over\alpha}&r<R_\beta
    \end{cases}
  \end{equation}
in the variable $r\in\rz_+$.   
\end{definition}
That these definitions are meaningful is a consequence of Lemma
\ref{l0} below: Because the function $F_\beta$ with $\beta\in\rz_+$
fixed, is continuous and diverges to $+\infty$ for $R\to0$ or
$R\to\infty$, its infimum on $\rz_+$ is attained. Because of the
strict convexity of $F_\beta$ its minimizer $R_\beta$ is uniquely
defined. Thus $R$ is well defined.

Since $R_\beta$ is well defined, it follows that
$\tilde F_{\alpha,\beta}$ is well defined. Now, we repeat the above
argument to define $\tilde R_{\alpha,\beta}$ again using Lemma
\ref{l0}.
\begin{lemma}
  \label{l0}
  The functions $F_\beta$ and $\tilde F_{\alpha,\beta}$ are continuous,
  diverge to $+\infty$ at $0$ and $\infty$, and are strictly convex.
\end{lemma}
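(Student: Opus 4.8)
The plan is to verify the three asserted properties---continuity, divergence at $0$ and $\infty$, and strict convexity---separately for $F_\beta$ and then for $\tilde F_{\alpha,\beta}$, treating the two branches of the latter carefully at the junction point $r=R_\beta$. For $F_\beta(r)=1/(r\arsinh(r)^3)+r/\beta$, continuity on $\rz_+$ is immediate since $\arsinh$ is smooth and strictly positive away from $0$. As $r\to\infty$ the linear term $r/\beta$ forces divergence; as $r\to0^+$, $\arsinh(r)\sim r$, so the first term behaves like $r^{-4}\to+\infty$. For strict convexity I would differentiate twice: the term $r/\beta$ is linear, so it suffices to show $g(r):=1/(r\arsinh(r)^3)$ has $g''>0$ on $\rz_+$. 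Since $g$ is a product of positive, decreasing, log-convex-type factors, the cleanest route is to write $g(r)=\exp(-\log r-3\log\arsinh(r))$ and show the exponent $\varphi(r):=-\log r-3\log\arsinh r$ is convex with $\varphi'$ not identically matching a linear function; then $g=e^\varphi$ is convex as the composition of a convex increasing function with a convex function, and strict convexity follows because $\varphi$ is strictly convex. Convexity of $\varphi$ reduces to $(\log r)''\le 0$ being dominated, i.e. one checks $-(\log r)''=1/r^2>0$ and $-3(\log\arsinh r)''>0$; the latter is $-3\,\frac{d}{dr}\!\big(\frac{1}{\arsinh r\sqrt{r^2+1}}\big)$, which is positive since $\arsinh r\sqrt{r^2+1}$ is increasing. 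This is a short monotonicity computation.

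For $\tilde F_{\alpha,\beta}$ the linear term $r/\alpha$ is again harmless, so the work is in the piecewise function $h(r)$ equal to $1/(r\arsinh(R_\beta)^3)$ for $r\ge R_\beta$ and $r^{-4}(R_\beta/\arsinh R_\beta)^3$ for $r<R_\beta$. On each open piece, continuity and strict convexity are clear: $r\mapsto c/r$ with $c>0$ is strictly convex on $\rz_+$, and $r\mapsto c'/r^4$ with $c'>0$ likewise. The divergence at $\infty$ comes from $r/\alpha$; the divergence at $0^+$ comes from the $r^{-4}$ branch. The two genuinely non-routine points are: (i) continuity at $r=R_\beta$, which requires the two branch values to agree there, i.e. $1/(R_\beta\arsinh(R_\beta)^3)=R_\beta^{-4}(R_\beta/\arsinh R_\beta)^3$---but both sides equal $1/(R_\beta\arsinh(R_\beta)^3)$, so this is an identity by construction; and (ii) convexity across the junction. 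For (ii), since each branch is $C^1$ up to the boundary, a piecewise-$C^1$, continuous function that is convex on each side is convex globally provided the left derivative at $R_\beta$ does not exceed the right derivative. The left derivative is $-4R_\beta^{-5}(R_\beta/\arsinh R_\beta)^3=-4/(R_\beta^{2}\arsinh(R_\beta)^{3})$ and the right derivative is $-1/(R_\beta^{2}\arsinh(R_\beta)^{3})$; since $-4<-1$, the derivative jumps \emph{upward} at $R_\beta$, so convexity is preserved (in fact the function has a convex kink there).

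The step I expect to be the main obstacle is establishing strict convexity of $g(r)=1/(r\arsinh(r)^3)$ cleanly, since the naive second-derivative computation is messy; the logarithmic-convexity packaging above is the device that avoids grinding through it, but one still must verify that $\log\arsinh r$ is concave, equivalently that $r\mapsto \frac{1}{\arsinh(r)\sqrt{r^2+1}}=\frac{(\arsinh r)'}{\arsinh r}$ is decreasing, i.e. $(\log\arsinh r)'$ is decreasing. This is where a short explicit estimate on $\arsinh$ is needed, and it is presumably among the ``basic properties of $F$'' collected in Appendix \ref{Anhang1}, on which I would rely. Everything else---the two limits, the junction identities, the sign of the derivative jump---is a direct verification from the definitions.
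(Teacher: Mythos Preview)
Your argument is correct and, for $\tilde F_{\alpha,\beta}$, essentially identical to the paper's: both verify continuity of the two branches at $R_\beta$, strict convexity on each branch, and an upward jump of the derivative at the junction (the paper records the jump as $3/(R_\beta^2\arsinh(R_\beta)^3)$, matching your computation of $-1-(-4)=3$ over the common denominator).

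For $F_\beta$ you take a slightly different route. The paper simply writes down
\[
F_\beta'(r)= -\frac{1}{r^{2}\arsinh(r)^3}-\frac{3}{r\,\arsinh(r)^4\sqrt{1+r^2}}+\frac{1}{\beta}
\]
and observes that each negative term is the negative reciprocal of a product of increasing positive functions, hence increasing; thus $F_\beta'$ is strictly increasing and $F_\beta$ is strictly convex. Your log-convexity packaging---showing $\varphi=-\log r-3\log\arsinh r$ is strictly convex and hence $e^\varphi$ is---arrives at the same conclusion and is perfectly valid, but is a bit more elaborate than necessary. In particular, the step you flag as the ``main obstacle'' (concavity of $\log\arsinh r$, i.e.\ that $r\mapsto (\arsinh r\,\sqrt{1+r^2})^{-1}$ is decreasing) needs no estimate at all: $\arsinh r$ and $\sqrt{1+r^2}$ are each increasing, so their product is too. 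Note also that Appendix~\ref{Anhang1} concerns the unrelated function $F(t)=\int_0^t f$ from the Weizs\"acker term, not $F_\beta$; you do not need anything from it here.
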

\begin{proof}
  Obviously $F_\beta$ is continuous and has the stated behavior at $0$
  and $\infty$.  That it is strictly convex follows from the fact that
  it is twice differentiable (even real analytic) on $\rz_+$ and the
  second derivative is positive, since its first derivative
  \begin{equation}
    \label{Fb}
    F_\beta'(r)= -{1\over r^{2}\arsinh(r)^3}-{3\over r \arsinh(r)^4\sqrt{1+r^2}}+{1\over\beta}
  \end{equation}
  is obviously strictly increasing.

  The functions $\tilde F_{\alpha,\beta}$ are, by inspection, also
  continuous and diverge to $\infty$ at $0$ and $\infty$. Outside
  $r=R_\beta$ they are also twice differentiable (in fact again real
  analytic) and have the derivative
  \begin{equation}
    \label{Fab}
   \tilde F_{\alpha,\beta}'(r)= \begin{cases}
      -{1\over r^{2}\arsinh(R_\beta)^3}+{1\over\alpha} & r>R_\beta\\
      -{4\over r^5}\left({R_\beta\over\arsinh(R_\beta)}\right)^3 +\frac1\alpha&r<R_\beta
      \end{cases}.
    \end{equation}
    Outside $R_\beta$ these functions are monotone increasing. In
    addition at $R_\beta$ we have a jump of positive height, namely
    \begin{equation}
      \label{sprung}
      \lim_{r\nearrow R_\beta}\tilde F'_{\alpha,\beta}(r)-
      \lim_{r\searrow R_\beta}\tilde F'_{\alpha,\beta}(r)
      = {3\over R_\beta^2\arsinh(R_\beta)^3}>0,
    \end{equation}
    which shows strict
    convexity.
  \end{proof}
  We need the following basic properties of the functions $R$ and
  $\tilde R$.
\begin{lemma}
  The following properties hold:
  \begin{enumerate}
  \item For all $\beta\in\rz_+$ we have
    $R(\beta)=\tilde R(\beta,\beta)$.
  \item The functions $R$ and $\tilde R(\cdot,\beta)$ (with
    $\beta\in\rz_+$ fixed) are monotone increasing maps onto $\rz_+$.
  \end{enumerate}
\end{lemma}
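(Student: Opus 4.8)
The plan is to treat the two items separately, using throughout that $R_\beta$ and $\tilde R_{\alpha,\beta}$ are, by Lemma~\ref{l0}, the \emph{unique} minimizers of strictly convex coercive functions, so that each is pinned down by the first‑order condition that $0$ belong to the relevant (sub)differential. For item~(1) I would verify directly that $R_\beta$ satisfies the minimality condition for $\tilde F_{\beta,\beta}$, i.e., since $\tilde F_{\beta,\beta}$ is strictly convex, that
\[
\lim_{r\nearrow R_\beta}\tilde F'_{\beta,\beta}(r)\ \le\ 0\ \le\ \lim_{r\searrow R_\beta}\tilde F'_{\beta,\beta}(r).
\]
Both one‑sided limits are read off from \eqref{Fab} with $\alpha=\beta$, and in each the quantity $1/\beta$ is eliminated using $F_\beta'(R_\beta)=0$ together with \eqref{Fb}, which gives $1/\beta = R_\beta^{-2}\arsinh(R_\beta)^{-3}+3R_\beta^{-1}\arsinh(R_\beta)^{-4}(1+R_\beta^2)^{-1/2}$. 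After substitution the right limit equals the second summand, hence is $>0$, while the left limit equals $3R_\beta^{-1}\arsinh(R_\beta)^{-3}\big((\arsinh(R_\beta)\sqrt{1+R_\beta^2})^{-1}-R_\beta^{-1}\big)$, which is $\le 0$ because $\arsinh(s)\sqrt{1+s^2}\ge s$ for every $s\ge0$ (the difference vanishes at $0$ and has derivative $s\arsinh(s)/\sqrt{1+s^2}\ge0$). Hence $R_\beta$ minimizes $\tilde F_{\beta,\beta}$, so $\tilde R(\beta,\beta)=R_\beta=R(\beta)$.

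For the claim that $R$ is a monotone increasing map onto $\rz_+$, I would write $F_\beta'(r)=G(r)+1/\beta$ with $G$ the $\beta$‑independent part of \eqref{Fb}. The proof of Lemma~\ref{l0} already gives that $G$ is continuous and strictly increasing, and one checks $G(r)\to-\infty$ as $r\to0$ and $G(r)\to0$ as $r\to\infty$; thus $G$ is a strictly increasing homeomorphism of $\rz_+$ onto $(-\infty,0)$. Since $R_\beta$ is the unique zero of $F_\beta'$, we have $R(\beta)=G^{-1}(-1/\beta)$, and composing the homeomorphism $\beta\mapsto-1/\beta$ of $\rz_+$ onto $(-\infty,0)$ with $G^{-1}$ shows that $R$ is a (in fact strictly) increasing homeomorphism of $\rz_+$ onto $\rz_+$.

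For $\tilde R(\cdot,\beta)$, fixing $\beta$ and writing $\tilde F_{\alpha,\beta}(r)=g_\beta(r)+r/\alpha$ with $g_\beta$ the $\alpha$‑independent summand in \eqref{Rab}, monotonicity is the standard comparison of minimizers: for $\alpha_1<\alpha_2$ with minimizers $r_1,r_2$, adding $\tilde F_{\alpha_1,\beta}(r_1)\le\tilde F_{\alpha_1,\beta}(r_2)$ and $\tilde F_{\alpha_2,\beta}(r_2)\le\tilde F_{\alpha_2,\beta}(r_1)$ cancels the $g_\beta$‑terms and leaves $(r_1-r_2)(\alpha_1^{-1}-\alpha_2^{-1})\le0$, whence $r_1\le r_2$. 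Surjectivity then follows from continuity of $\alpha\mapsto\tilde R_{\alpha,\beta}$ (strict convexity plus joint continuity of $\tilde F_{\alpha,\beta}$) together with $\tilde R_{\alpha,\beta}\to0$ as $\alpha\to0$ and $\tilde R_{\alpha,\beta}\to\infty$ as $\alpha\to\infty$, via the intermediate value theorem. Equivalently, and perhaps cleanest, one can solve the piecewise first‑order condition explicitly: the minimizer equals $(4\alpha R_\beta^3/\arsinh(R_\beta)^3)^{1/5}$ for $\alpha<\tfrac14R_\beta^2\arsinh(R_\beta)^3$, equals $R_\beta$ for $\tfrac14R_\beta^2\arsinh(R_\beta)^3\le\alpha\le R_\beta^2\arsinh(R_\beta)^3$, and equals $\sqrt\alpha\,\arsinh(R_\beta)^{-3/2}$ for $\alpha>R_\beta^2\arsinh(R_\beta)^3$, a formula that is manifestly continuous, non‑decreasing, and onto $\rz_+$.

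The one delicate point is the corner of $\tilde F_{\alpha,\beta}$ at $r=R_\beta$, where the derivative jumps (this is \eqref{sprung}); consequently all optimality statements there must use one‑sided derivatives, and in item~(1) it genuinely matters that $\arsinh(R_\beta)\sqrt{1+R_\beta^2}\ge R_\beta$ so that $0$ lies \emph{inside} the subdifferential $[\tilde F'_{\beta,\beta}(R_\beta^-),\tilde F'_{\beta,\beta}(R_\beta^+)]$ rather than strictly beyond one of its endpoints (the same corner is also why $\tilde R(\cdot,\beta)$ is only non‑decreasing, not strictly increasing). All remaining steps are routine.
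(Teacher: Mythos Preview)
Your proposal is correct, and in places cleaner than the paper's own proof, but the routes differ in interesting ways.

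For item~(1) the paper argues more simply than you do: it observes directly from the definitions that $\tilde F_{\beta,\beta}(r)\ge F_\beta(r)$ for all $r$ (using that $\arsinh$ is increasing on the branch $r\ge R_\beta$, and that $s\mapsto s/\arsinh(s)$ is increasing on the branch $r<R_\beta$), with equality at $r=R_\beta$; hence the minimizer $R_\beta$ of $F_\beta$ also minimizes $\tilde F_{\beta,\beta}$, and uniqueness finishes. Your subdifferential check at the corner is equally valid but requires the auxiliary inequality $\arsinh(s)\sqrt{1+s^2}\ge s$, which you prove correctly; the paper's pointwise comparison avoids this computation.

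For the monotonicity of $R$ both arguments are essentially the same implicit-function reading of $F_\beta'(R_\beta)=0$. For $\tilde R(\cdot,\beta)$ your approach is actually more transparent than the paper's: the paper does a case analysis on $\alpha\lessgtr\beta$ and tracks the Euler equation branch by branch, whereas your two-line rearrangement argument $(r_1-r_2)(\alpha_1^{-1}-\alpha_2^{-1})\le0$ gives monotonicity in one stroke and works uniformly across the corner. Your explicit piecewise formula for $\tilde R_{\alpha,\beta}$ is a nice bonus; it makes surjectivity and continuity immediate and, as you note, exhibits the flat interval $[\tfrac14R_\beta^2\arsinh(R_\beta)^3,\,R_\beta^2\arsinh(R_\beta)^3]$ on which $\tilde R_{\alpha,\beta}\equiv R_\beta$, so the map is only non-decreasing rather than strictly increasing. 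The paper's proof does not make this plateau explicit.
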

\begin{proof}
  (1) For all $\beta,r\in\rz_+$ we have directly from the
  definitions of the functions $\tilde F_{\alpha,\beta}$ and $F_\beta$
  that $\tilde F_{\beta,\beta}(r)\geq F_\beta(r)$ and
  $\tilde F_{\beta,\beta}(R_\beta)=F_\beta(R_\beta)$. Thus, $R_\beta$
  does not only minimize $F_\beta$ but also $\tilde
  F_{\beta,\beta}$. However, the minimizer of $\tilde F_{\beta,\beta}$
  is uniquely determined. Thus, $\tilde R_{\beta,\beta}=R_\beta$.

  (2) $R(\beta)$ solves the equation $F_\beta'(r)=0$. From \eqref{Fb}
  it is obvious that the function $R$ is monotone increasing. It is
  also obvious from \eqref{Fb} that $R_\beta\to0$ as $\beta\to0$ and
  $R_\beta\to \infty$ as $r\to\infty$. This shows the claim on $R$.

  By strict convexity, $\tilde F_{\alpha,\beta}'(r)<0$ for
  $r<\tilde R_{\alpha,\beta}$ and $\tilde F_{\alpha,\beta}'(r)>0$ for
  $r>\tilde R_{\alpha,\beta}$ holding for all $\alpha,\beta\in\rz_+$,
  holding in particular for $\alpha=\beta$. Moreover, for
  $\alpha>\beta$ and $r<R_\beta$ we have
  \begin{equation}
    \tilde F_{\alpha,\beta}'(r) =- {4\over r^5}\left({R_\beta\over\arsinh(R_\beta)}\right)^3+\frac1\alpha < F_{\beta,\beta}'(r)<0,
  \end{equation}
  i.e., $\tilde R_{\alpha,\beta}\geq R_\beta$ for $\alpha>\beta$. Away
  from $R_\beta$ we can use again the Euler equation: suppose that
  there is a $\alpha'\in\rz_+$ such that
  $\tilde R_{\alpha',\beta}>R_\beta$. Then $\tilde R_{\alpha',\beta}$
  fulfills
  $$\tilde F_{\alpha',\beta}'(\tilde R_{\alpha',\beta}) =-{1\over \tilde R_{\alpha',\beta}^2\arsinh(R_\beta)^3}+\frac1{\alpha'}=0.$$
  Thus, we have for any $\alpha''>\alpha'$, an
  $r'>\tilde R_{\alpha',\beta}>R_\beta$ solving the equation
  $$\tilde F_{\alpha'',\beta}'(r)=-{1\over r^2\arsinh(R_\beta)^3}+\frac1{\alpha''}=0.$$
  Since $\tilde R_{\alpha'',\beta}>R_\beta$, we have $r'=\tilde R_{\alpha'',\beta}$. Thus $\tilde R_{\alpha'',\beta}> \tilde R_{\alpha',\beta}$ for
  $\alpha''>\alpha'>\beta$. This implies  the monotony in $\alpha$
  for $\alpha>\beta$. Similar arguments yield the monotony in $\alpha$
  for $\alpha\leq\beta$.  
\end{proof}

\begin{proof}[Proof of Theorem \ref{thm1}]
The condition $\nabla(F\circ p)\in L^2(\rz^3)$ is a mere
rewriting of the finiteness condition of the Weizs\"acker
term. Similarly $D[p^3]<\infty$ is equivalent to the finiteness of the
electron-electron repulsion.

Next we look at the Thomas-Fermi term. Obviously the massive
Thomas-Fermi term is bounded above by the massless one. It is also
bounded from below by the massless Thomas-Fermi term minus the particle number
(see \eqref{eq:masselos}), i.e.,
\begin{equation}
  \label{tfmasselos}
  \cTF(p)\geq  {1\over 4\pi^2}\int_{\rz^3}\rd x\; p(x)^4 - N.
\end{equation}
Both bounds are obvious from the representation of $\cTF$ as phase space
integral.

Since $p\in L^4(\rz^3)$, $p$ decays at infinity and therefore also
$F\circ p$. Thus, we may employ the Sobolev inequality yielding
$F\circ p\in L^6(\rz^3)$ and estimate $F$ from below using \eqref{a1}
from the appendix. We get
\begin{multline}
  \label{sw}
  \int_{\rz^3}\rd x\; |\nabla(F\circ p)(x)|^2
  \geq c_s \left(\int_{\rz^3}\rd x\;|F\circ p (x)|^6\right)^{1/3}\\
  \geq \frac{c_s}4\left(\int_{\rz^3}\rd x\;p(x)^6\arsinh(p(x))^3\right)^{1/3} =:{c_s\over4} W(p)^{1/3}
\end{multline}
where $c_s$ is the Sobolev constant.
Thus, using \eqref{sw}, \eqref{eq:tfg}, and \eqref{tfmasselos} we get
\begin{equation}
  \label{eq:lobo1}
  \cTFW(p)\geq {3c_s\lambda\over32\pi^2}W(p)^{1/3}+{1\over4\pi^2}T(p)-N -\frac3{10}\gamma\alpha_S \int_{\rz^3}\rd x\; p(x)^5 -\frac{\alpha_S}\gamma A
\end{equation}
with
$T(p):= \int_{\rz^3}\rd x\;p(x)^4$. We pick
$$\gamma=\sqrt{A\over \tfrac3{10}\int_{\rz^3}\rd x\; p(x)^5}$$
and get
\begin{equation}
  \label{eq:lowbo2}
  \cTFW(p)\geq {3\lambda c_s\over32\pi^2}W(p)^{1/3}+{1\over4\pi^2} T(p)-N
  -2\alpha_S\sqrt{\frac{3A}{10}\int_{\rz^3}\rd x\; p(x)^5 }.
\end{equation}

We pick $\beta\in\rz_+$ to be specified later but independently of $p$
and estimate
\begin{eqnarray}
  &&\int_{\rz^3}\rd x\; p(x)^5\\
  &\leq& {1\over r \arsinh(r)^3}\int_{p(x)>r}\rd x\; p(x)^6\arsinh(p(x))^3 + r\int_{p(x)\leq r}\rd x\;p(x)^4\label{15}\\
  &\leq & {1\over r \arsinh(r)^3} W(p) + r T(p)\label{16}\\
  &\leq& \label{17}
         \begin{cases}
    {1\over r \arsinh(R_\beta)^3} W(p) + r T(p)& r\geq R_\beta\\
    \frac1{r^4} W(p)(R_\beta/\arsinh(R_\beta))^3 + r T(p) & r<R_\beta
    \end{cases}
\end{eqnarray}
where $R_\beta$ is the unique minimizer of the function $F_\beta$
defined in \eqref{Rb}. Using \eqref{Rab} allows us to rewrite
\eqref{17} so that we get
\begin{eqnarray}
  \label{eq:28a}
  \int_{\rz^3}\rd x\; p(x)^5 \leq W(p) \tilde F_{Q,\beta}(r)
\end{eqnarray}
in the variable $r$. Here we use the abbreviation $Q:=W(p)/T(p)$. The resulting minimizer
$\tilde R_{Q,\beta}$ exists uniquely for each $p$ and $\beta$
by Lemma \ref{l0}. Inserting it into \eqref{eq:28a} gives
\begin{eqnarray}
  \label{eq:28b}
  &\int_{\rz^3}\rd x\; p(x)^5 \leq W(p) \tilde F_{Q,\beta}(\tilde R_{Q,\beta})\\
  =  &\begin{cases}
    {1\over \tilde R_{Q,\beta} \arsinh(R_\beta)^3} W(p) + \tilde R_{Q,\beta}T(p) & Q\geq \beta\\
    \frac1{R_{Q,\beta}^4} W(p)(R_\beta/\arsinh(R_\beta))^3 + R_{Q,\beta} T(p) & Q<\beta
    \end{cases}.
\end{eqnarray}
In the last step we used that $\tilde R_{\alpha,\beta}$ is monotone
increasing in $\alpha$ and $\tilde R_{\beta,\beta}=R_\beta$.

The optimizer of the first line is
$$ \tilde R_{Q,\beta}=\sqrt{W(p)\over T(p)\arsinh(R_\beta)^3}.$$
The second line is minimized for
$$\tilde R_{Q,\beta}= \sqrt[5]{4W(p)R_\beta^3\over\arsinh(R_\beta)^3T(p)}.$$
Thus, 
\begin{eqnarray}\nonumber
  \int_{\rz^3}\rd x\; p(x)^5 &\leq&
  \begin{cases}
       2 \sqrt{W(p)T(p)\over\arsinh(R_\beta)^3}  &W(p)\geq \beta T(p)\\
      \frac5{4^{4/5}}(W(p)(R_\beta/\arsinh(R_\beta)^3)^{1/5}T(p)^{4/5}& W(p)<\beta T(p)   
    \end{cases}\\
     &\leq&
     \begin{cases}
      2 \sqrt{W(p)T(p)\over\arsinh(R_\beta)^3}  &W(p)\geq \beta T(p)\\
      \frac5{4^{4/5}}(\beta(R_\beta/\arsinh(R_\beta)^3)^{1/5}T(p)& W(p)<\beta T(p)   \label{21}
     \end{cases}.
  \end{eqnarray}
  We insert this bound in \eqref{eq:lowbo2} and obtain
  \begin{multline}
    \label{lowbo3}
    \cTFW(p)
    \geq{3\lambda c_s\over32\pi^2} W(p)^{1/3}+\frac1{4\pi^2}T(p)-N\\
    -    2\alpha_S
    \begin{cases}
      \sqrt{{3A \over5\sqrt{2\arsinh(R_\beta)^3}}}\left({W(p)^{1/3}\over \frac43}+{T(p)\over4}\right)  &W(p)\geq \beta T(p)\\
      \sqrt{\frac32{A\over 4^{4/5}}(\beta R_\beta^3/\arsinh(R_\beta)^3)^{1/5}
        T(p)}&
      W(p)<\beta T(p).
    \end{cases}
    \end{multline}
    Now, we pick $\beta$ such that
    \begin{equation}
      \label{eq:optbeta}
      \min\{\frac{\lambda c_s}{8\pi^2},\frac1{\pi^2}\}
      =\sqrt{{3A \over5\sqrt{2\arsinh(R_\beta)^3}}}.
    \end{equation}
    This bounds the case of the first line from below by $-N$.

    The second line is bounded from below -- independently of $p$, since
    the leading power in $T(p)$ has a positive coefficient.  To make
    this quantitative, we write $a$ for the coefficient of $T(p)$ and
    $b$ for the coefficient of $T(p)$ in the second case of
    \eqref{lowbo3}.  Now $aT(p)-b\sqrt{T(p)}$ is minimized for
    $T(p)= b^2/(4a^2)$ implying
    \begin{equation}
      \label{optT}
      aT(p)-b\sqrt{T(p)}\geq -b^2/(4a)=:C(A)
    \end{equation}
    independently of $p$.

    Since \eqref{eq:optbeta} implies that $R_\beta$ is increasing in
    $A$, we find that $C(A)$ is increasing. Moreover by the definition
    of $A$ in \eqref{eq:tfg}, we can estimate
    $A\leq e_\mathrm{TF} Z_\infty^{7/3}\cdot K$. Thus, the second case
    of \eqref{lowbo3} is bounded from below by
    $-N - C(e_\mathrm{TF} Z_\infty^{7/3}\cdot K)$. Writing the latter
    constant as $c(Z^{7/3}_\infty\cdot K)$ yields \eqref{stabil}.
  \end{proof}
  
\section{Proof of the Existence of Minimizers (Theorem
  \ref{thm1a}\label{Abschnitt2})}
    
      Since $\cTFW$ is bounded from below on $P_N$, we will now address
    the question whether the infimum is attained. It is convenient, to
    regard the functional as function of the density $\rho$ instead of
    the Fermi momentum $p$ and similarly for other parts of the energy
    functional. In abuse of notation we write $\cTFW(\rho)$ instead of
    $\cTFW(\sqrt[3]{3\pi^2\rho})$, i.e.,
    \begin{align}
      \label{tfwr}
      &\cTFW(\rho)\nonumber\\
      =& {3\lambda\over8\pi^2}\int_{\rz^3}\rd x\;|\nabla( F\circ\sqrt[3]{3\pi^2\rho})(x)|^2
          + \cTF(\rho)- \alpha_S \sum_{k=1}^K\int_{\rz^3}\rd x{Z_k\rho(x)\over|x-R_k|} 
          +\alpha_S D[\rho]\\
      &+\sum_{1\leq k<l\leq K}{\alpha_S Z_kZ_l\over |R_k-R_l|}.\nonumber
    \end{align}
    \begin{theorem}
\label{thmmin}
For every $N,Z_1,...,Z_K\in\rz_+$ and $R_1,...,R_K\in\rz^3$ there
exists $\rho\in P_N$ such that
\begin{equation}
\label{Gleichung.6.0}
\begin{aligned}
\cTFW(\rho)=\inf \cTFW(P_N).
\end{aligned}
\end{equation}
\end{theorem}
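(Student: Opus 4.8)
The plan is to prove Theorem~\ref{thmmin} by the direct method of the calculus of variations, following Benguria et al.~\cite{Benguriaetal1981}. Throughout it is convenient to work with $\chi:=F\circ(3\pi^2\rho)^{1/3}$, in terms of which the Weizs\"acker part of \eqref{tfwr} is simply $\frac{3\lambda}{8\pi^2}\int_{\rz^3}|\nabla\chi|^2\,\rd x$; since $F\colon\rz_+\to\rz_+$ is a continuous bijection with continuous inverse (Appendix~\ref{Anhang1}), $\rho$ is recovered from $\chi$ by $\rho=(F^{-1}\circ\chi)^3/(3\pi^2)$. I would start from a minimizing sequence $(\rho_n)$ in $P_N$; by Theorem~\ref{thm1} the infimum is finite, so $\cTFW(\rho_n)\le M$ for some $M$ and all $n$.

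\emph{A priori bounds.} First I would revisit the lower bound \eqref{lowbo3} from the proof of Theorem~\ref{thm1}, but choose the cut-off parameter $\beta$ \emph{large} rather than optimally. Because the coefficient $2\alpha_S\sqrt{3A/(5\sqrt{2\arsinh(R_\beta)^3})}$ appearing there tends to $0$ as $\beta\to\infty$ (since $R_\beta\to\infty$; see the remarks following Lemma~\ref{l0}), for $\beta$ fixed large enough one obtains a bound of the shape
\[
  \cTFW(p)\ \ge\ c_0\,W(p)^{1/3}+c_1\,T(p)-N-c_2,\qquad c_0,c_1>0,
\]
valid on all of $P_N$, where $T(p):=\int_{\rz^3}p(x)^4\,\rd x$ and $W(p):=\int_{\rz^3}p(x)^6\arsinh(p(x))^3\,\rd x$. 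Hence $W(p_n)$ and $T(p_n)$ are bounded; combining the elementary splitting $\int_{\rz^3}p^5\,\rd x\le rT(p)+W(p)/(r\arsinh(r)^3)$ (used around \eqref{tfmasselos}) with H\"older's inequality, $\|\rho_n\|_{L^{5/3}}+\|\rho_n\|_{L^1}$ is bounded, so the electron--nucleus attraction $\sum_k Z_k\int_{\rz^3}\rho_n(x)|x-R_k|^{-1}\,\rd x$ is bounded; feeding this back into \eqref{tfwr}, together with $\cTF(\rho_n)\ge-N$ (cf.\ \eqref{eq:masselos}, \eqref{tfmasselos}) and positivity of $D[\rho_n]$ and of the nuclear repulsion, I get that $\int_{\rz^3}|\nabla\chi_n|^2\,\rd x$ and $D[\rho_n]$ are bounded as well, and $\int_{\rz^3}\rho_n\,\rd x\le N$.

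\emph{Compactness and lower semicontinuity.} Since $(\chi_n)$ is bounded in the Hilbert space $D^1(\rz^3)\hookrightarrow L^6(\rz^3)$ and nonnegative, it is bounded in $H^1$ on every ball, so by Rellich's theorem and a diagonal argument I may pass to a subsequence with $\chi_n\rightharpoonup\chi$ in $D^1(\rz^3)$ and $\chi_n\to\chi$ a.e.; then $p_n\to p:=F^{-1}\circ\chi$ a.e.\ and $\rho_n\to\rho:=p^3/(3\pi^2)$ a.e. By Fatou $\int_{\rz^3}\rho\,\rd x\le\liminf_n\int_{\rz^3}\rho_n\,\rd x\le N$, and $D[\rho]<\infty$ and $\chi=F\circ p\in D^1(\rz^3)$, so $\rho\in P_N$. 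I would then pass to the limit term by term in \eqref{tfwr}: the Weizs\"acker term is weakly lower semicontinuous; the relativistic Thomas--Fermi term equals $\int_{\rz^3}j(\rho(x))\,\rd x$ with $j$ continuous and nonnegative (the integrand being the nonnegative relativistic kinetic energy density, Appendix~\ref{Anhang0}), so Fatou applies; the electron repulsion is lower semicontinuous because $(\rho_n)$ is bounded in the Coulomb-energy Hilbert space and converges a.e.; the nuclear repulsion is a fixed constant; and the attraction \emph{converges}, as one sees by splitting $|x-R_k|^{-1}=|x-R_k|^{-1}\mathbf{1}_{|x-R_k|<L}+|x-R_k|^{-1}\mathbf{1}_{|x-R_k|\ge L}$, noting that the first piece lies in $L^{5/2}(\rz^3)$ and pairs with the weak $L^{5/3}$-limit of $\rho_n$ (a.e.\ convergence together with the uniform $L^{5/3}$-bound above), while the second contributes at most $N/L$ for both $\rho_n$ and $\rho$, and letting $L\to\infty$. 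Adding up, using $\liminf(a_n+b_n)\ge\liminf a_n+\liminf b_n$, yields $\cTFW(\rho)\le\liminf_n\cTFW(\rho_n)=\inf\cTFW(P_N)$, and since $\rho\in P_N$ this forces $\cTFW(\rho)=\inf\cTFW(P_N)$.

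\emph{The main obstacle} is the attraction term: Fatou gives only lower semicontinuity of $\int_{\rz^3}\rho_n(x)|x-R_k|^{-1}\,\rd x$, which is the \emph{wrong} direction given its negative sign, so one genuinely has to prove convergence. This in turn automatically absorbs the possible escape of mass to infinity along the minimizing sequence --- escaping mass feels no nuclear attraction --- and this is exactly where the global constraint $\int\rho_n\le N$ and the local ($H^1$) compactness both enter. A second, more hidden, difficulty already appears in the first step: the naive a priori estimates are circular, since one cannot control the Weizs\"acker term without first controlling the attraction and conversely; breaking the loop forces one to use the \emph{precise} structure of the stability bound \eqref{lowbo3}, in particular that it dominates $W(p)^{1/3}$ with a constant that an over-generous choice of $\beta$ can render favorable. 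I do not expect this argument to yield $\int\rho=N$; that sharpening, valid for $N\le Z$, is a separate matter.
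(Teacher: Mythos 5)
Your argument is correct and follows the same overall strategy as the paper (direct method: coercivity from the stability estimate, weak/local compactness of a minimizing sequence, term-by-term semicontinuity, with the attraction term being the one that needs genuine convergence), but several steps are implemented differently, and in a way that is arguably leaner. The paper extracts weak limits in four different topologies ($D^1$, $L^{4/3}$, $L^{5/3}$, and the Coulomb inner product) and then must identify them with one another; the most technical point is showing strong local $L^{4/3}$ convergence of $\rho_j$ from strong local $L^6$ convergence of $\chi_j$, which uses the inverse bound \eqref{Ginverse} of Lemma \ref{la1} and a H\"older juggling. You avoid all of that by anchoring everything to almost-everywhere convergence of $\chi_n$ (local Rellich plus a diagonal argument): Fatou then handles the Thomas--Fermi term (using pointwise nonnegativity of the phase-space integrand, where the paper instead uses convexity and the derivative \eqref{ableitung-rho}), the repulsion $D[\rho]$, and the constraint $\int\rho\le N$. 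For the attraction you replace the paper's decomposition of $|x|^{-1}$ into an $L^{5/2}$ piece plus a smeared-charge potential with $D[\mu]<\infty$ (paired with weak convergence in the Coulomb inner product) by the cruder cut at radius $L$, bounding the tail by $N/L$ via the particle-number constraint; both work, yours being simpler precisely because the constraint $\int\rho_n\le N$ is available on $P_N$, while the paper's version does not use it. Your preliminary observation that one must rerun \eqref{lowbo3} with a large (rather than optimized) $\beta$ and then feed the resulting bounds on $T$, $W$, and the attraction back into \eqref{tfwr} to control the Weizs\"acker and Coulomb norms is exactly the right way to make precise the paper's brisk claim that \eqref{lowbo3} yields coercivity in those norms. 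Two small points you should state explicitly when writing this up: $p\in L^4$ (i.e.\ $T(p)<\infty$) for the limit, needed for membership in $P$, follows from Fatou and the uniform bound on $T(p_n)$; and the a.e.\ limit must be identified with the weak $D^1$ limit (via the local strong $L^2$ convergence) before invoking weak lower semicontinuity of the gradient term. As you say, your argument does not give $\int\rho=N$ for $N\le Z$; that is indeed proved separately in the paper (Section \ref{s4}).
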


  Since $\cTFW$ has a lower bound, there is a minimizing
  sequence {$\rho_j$}, such that
\begin{equation}
\label{Gleichung.6.1}
\begin{aligned}
\lim_{j\to\infty}\cTFW(\rho_j)=\inf\cTFW(P_N).
\end{aligned}
\end{equation}
However \eqref{lowbo3} shows not only boundedness from below but also
that $\cTFW(\rho_j)\to\infty$ as either of the
norms $\|\rho_j\|_{L^{4/3}(\rz^3)}$,
$\|F\circ \sqrt[3]{3\pi^2\rho_j} \|_{D^1(\rz^3)}$, or
$\|\rho_j\|_C:=\sqrt{D[\rho]}$ tend to infinity. By the Banach-Alaoglu
theorem we can pick a subsequence, such that we have weak convergence
in all of these norms. We now imagine that we started already with
this subsequence to avoid a new notation. We have
$\chi\in D^1(\rz^3)$,  $\zeta_2\in L^{4/3}(\rz^3)$, 
$\zeta_3$ with $D[\zeta_3]<\infty$, and $\zeta_4\in L^{5/3}(\rz)$ such that, as $j\to\infty$,
  \begin{eqnarray}
    \label{eq:b1}
    \int_{\rz^3}\rd x\; \nabla f(x) \nabla (F\circ \sqrt[3]{3\pi^2\rho_j} -\chi)(x)\to 0&& \forall f\in D^1(\rz^3),\\
    \int_{\rz^3}\rd x\; f(x) (\rho_j-\zeta_2)(x)\to 0&& \forall f\in L^4(\rz^3),\\
    D(f,\rho_j-\zeta_3)\to 0 && \forall f\ \text{with}\ D[f]<\infty,\\
    \label{eq:b4}
    \int_{\rz^3}\rd x\; f(x)(\rho_j-\zeta_4)(x)\to 0&& \forall f\in L^{5/2}(\rz^3)
  \end{eqnarray}
  using the abbreviation \eqref{2}.  The convergence \eqref{eq:b4}
  holds, since by \eqref{21}, $\|\rho_j\|_{L^{5/3}(\rz^3)}$ is also
  bounded.
  
To prove the existence of the minimizer, we prove the lower
semicontinuity of each term of $\cTFW$.

We begin with the Weizs\"acker term. Since the norm on
$D^1(\rz^3)$ is lower continuous (Lieb and Loss \cite[Section
8.2]{LiebLoss1996}), we immediately have
  \begin{equation}
    \label{halbstetigW}
    \cW(\zeta) \leq \liminf_{j\to\infty}\cW(\rho_j)
  \end{equation}
  with $\zeta:= (3\pi^2)^{-1}(F^{-1}(\chi))^3$.

  Next we consider the $\cTF$ which obviously is a convex
  functional with a derivative $d$ at $\rho$ (see also Appendix \ref{Anhang0})
  \begin{eqnarray}
    \label{eq:ableitung}
    [d(\cTF)(\rho)]: L^{4/3} &\to& \rz\\
    \eta &\mapsto & \int_{\rz^3} \rd x\; (\sqrt{(3\pi^2\rho)^{2/3}+1}-1) \eta(x).
  \end{eqnarray}
  Since $d(\cTF)(\rho)$ is in the dual space of $L^{4/3}$. Therefore, we get by convexity
  \begin{equation}
    \liminf_{j\to\infty}\cTF(\rho_j)\geq \cTF(\zeta_2) +\liminf_{j\to\infty}d(\cTF)(\eta_2)(\rho_j-\zeta_2)=\cTF(\zeta_2).
  \end{equation}

  The third term, the external potential, is actually weakly
  continuous at the limiting element. Because of linearity and
  translational invariance it suffices to show this for the Coulomb
  potential of a unit point charge at the origin. We decompose it into
  a singular and a long range part
  \begin{equation}
    \label{3.term}
    1/|x| := l(x) + r(x)
  \end{equation}
  with
  $$r(|x|) : = \int_{|y|<1}\rd y\;{\mu(y)\over|x-y|},$$
  where we pick some spherically symmetric charge distribution
  supported in the unit ball centered at the origin with
  $D[\mu]<\infty$. Whereas $l\in L^{5/2}$, since
  $\mathrm{supp}(l)\subset B_1(0)$ and $l(x)\leq 1/|x|$ because of
  Newton's theorem (see \cite[Section III.9]{Simon1979}).

  Thus
  $$\lim_{j\to\infty}\int_{\rz^3}\rd x\; r(x)\rho_j(x) = 2\lim_{j\to\infty}D(\mu,\rho_j)=2D(\mu,\zeta_3) = \int_{\rz^3}\rd x\;r(x)\zeta_3(x).$$
  Since $l\in L^{5/2}(\rz^3)$ we have
  $$\lim_{j\to\infty}\int_{\rz^3}\rd x\;l(x)\rho_j(x) =\lim_{j\to\infty}\int_{\rz^3}\rd x\;l(x)\zeta_4(x).$$

  Eventually we consider the electron-electron repulsion. By the
  Schwarz inequality
  $$D(\zeta_3,\zeta_3) = \lim_{j\to\infty}D(\rho_j,\zeta_3)\leq \liminf_{j\to\infty}\sqrt{D[\rho_j]D[\zeta_3]}$$
  and thus
  $$ D[\zeta_3]\leq \liminf_{j\to\infty} D[\rho_j].$$
 
  Thus all the terms are lower continuous at the corresponding
  limiting points. To conclude the proof, we wish to show that
  $\zeta=\zeta_2=\zeta_3=\zeta_4$ and that
  the limiting point is in $P_N$.

  Since $C_0^\infty(\rz^3)\subset L^4(\rz^3)\cap L^{5/2}(\rz^3)$ we
  have for all $f\in C_0^\infty(\rz^3)$ the two equalities
  $$ \int_{\rz^3}\rd x\; f(x)\zeta_2=\lim_{j\to\infty}\int_{\rz^3}\rd x\; f(x) \rho_j(x)=\int_{\rz^3}\rd x\; f(x)\zeta_4,$$
  i.e., $\zeta_2=\zeta_4$.

  Next we take $-(4\pi)^{-1}\Delta f$ with
  $f\in C_0^\infty(\rz^3)$. Then
  \begin{multline}
    \int_{\rz^3}\rd x\;f(x)\zeta_3(x)=D(-(4\pi)^{-1}\Delta f, \zeta_3)
    =\lim_{j\to\infty}D(-(4\pi)^{-1}\Delta f, \rho_j)\\
    =    \lim_{j\to\infty}\int_{\rz^3}\rd x\; f(x) \rho_j(x)=
    \int_{\rz^3}\rd x\;f(x)\zeta_2(x).
    \end{multline}
  Thus $\zeta_2=\zeta_3$. 

  By \cite[Theorem 8.7]{LiebLoss1996} we can also assume that
  $\chi_j:=F(\sqrt[3]{3\pi^2\rho_j})\to\chi|_A$ converges in
  $L^{6}(A)$ strongly on sets $A\subset\rz^3$ of finite measure. We
  will show now, that $\rho_j$ converges strongly in $L^{4/3}(A)$ to
  $\zeta|_A$ and thus $\zeta=\zeta_2$, since $A$ is arbitrary. To do
  this we will use an estimate on the inverse of $F$ that follows from
  Lemma \ref{la1} of Appendix \ref{Anhang1}.
  \begin{multline}
    (3\pi^2)^{4/3}\int_A\rd x\;|\rho_j(x)-\zeta(x)|^{4/3}
    = \int_A\rd x\;\left|(F^{-1}(\chi_j(x)))^3-(F^{-1}(\chi(x)))^3\right|^{4/3}\\
    \underset{\text{by}\ \eqref{Ginverse}}\leq 8 \int_A\rd
    x\;\left|\chi_j(x)^2(\chi_j(x)+1)-\chi(x)^2(\chi(x)+1)\right|^{4/3}\\
    \leq 8 \int_A\rd
    x\;\left|(\chi_j(x)-\chi(x))(\chi_j(x)^2+\chi_j(x)\chi(x)+\chi(x)^2+\chi_j(x)+\chi(x))\right|^{4/3}\\
    \leq8 \left(\int_A\rd x\; \left|\chi_j(x)-\chi(x)\right|^{9/2}\right)^{2/9}
      \left(\int_A\rd x\; \left|\chi_j^2+\chi_j\chi+\chi^2+\chi_j+\chi\right|^{12/7}\right)^{7/9}
  \end{multline}
  which tends to zero, since the last factor is uniformly bounded in
  $j$.

  Finally, $\zeta\in P_N$, since otherwise this would lead to
  immediate contradictions to $\zeta \geq 0$ or
  $\int_{\rz}\rd x\;\zeta(x)\leq N$.
 
  \section{Proof of a Lower Bound on the Maximal Ionization\label{s4}}

  \begin{proof}
    We will now assume that $N\leq Z$ and -- contradictory to the
    assumption of the theorem that the minimizer has a particle number
    that is strictly less than $N$, i.e.,
    $\int_{\rz^3} \rd x\;p_N(x)^3/3\pi^2 < N$. If this is the case, then $p_N$
    fulfills the Thomas-Fermi-Weizs\"acker equation
    \begin{equation}
      \label{Gleichung.7.3.0}
      \begin{aligned}
        &\frac{8p_N^2(x)(\sqrt{p_N^2(x)+1}-1)}{F'(p_N)(x)}-6\lambda\Delta F(p_N)(x)-\sum_{k=1}^K\frac{8\alpha_S Z_k p_N^2(x)}{|x-R_k|F'(p_N)(x)}\\
        &+\frac{8\alpha_S}{3\pi^2}\int_{\rz^3}\rd
        y\;\frac{p_N^2(x)p_N^3(y)}{|x-y|F'(p_N)(x)}=0.
      \end{aligned}
    \end{equation}
    The following is inspired by an idea of Benguria, Br{\'e}zis, and
    Lieb \cite{Benguriaetal1981}. We choose $\zeta_0\in C_0^\infty$
    the same as in Benguria, Br{\'e}zis, and Lieb
    \cite{Benguriaetal1981}. It is a spherically symmetric function
    such that $\mathrm{supp}(\zeta_0)\subset B_2(0)\setminus
    B_1(0)$. Set $\zeta_n(x)=\zeta_0(x/n)$. By (\ref{Gleichung.7.3.0})
    we have,
    \begin{multline}
      \label{Gleichung.7.3.1}
      \int_{\rz^3}\rd x\;\zeta_n^2(x)\left(8(\sqrt{p_N^2(x)+1}-1)-\frac{6\lambda F'(p_N)(x)\Delta F(p_N)(x)}{p_N^2(x)}\right. \\
      -\sum_{k=1}^K\frac{8\alpha_S
        Z_k}{|x-R_k|}\left.+\frac{8\alpha_S}{3\pi^2}\int_{\rz^3}\rd
        y\;\frac{p_N^3(y)}{|x-y|}\right)=0.
    \end{multline}
    Integrating by parts and using the Schwarz inequality, we have
    \begin{equation}
      \label{Gleichung.7.3.2}
      \begin{aligned}
        &-\int_{\rz^3}\rd x\;\zeta_n^2\frac{F'(p_N)\Delta F(p_N)}{p_N^2}\\
        =&\int_{\rz^3}\rd x\;\left(2\zeta_n\nabla\zeta_n\frac{F'(p_N)}{p_N^2}+\zeta_n^2\left(\frac{F'(p_N)}{p_N^2}\right)'\nabla p_N\right)F'(p_N)\nabla p_N\\
        \le&\frac{1}{\epsilon}\int_{\rz^3}\rd x\;|\nabla\zeta_n|^2+\int_{\rz^3}\rd x\;\left(\epsilon\left(\frac{F'^2(p_N)}{p_N^2}\right)^2+\left(\frac{F'(p_N)}{p_N^2}\right)'F'(p_N)\right)\zeta_n^2|\nabla p_N|^2.\\
      \end{aligned}
    \end{equation}
    Using the definition of $F(p_N)$, we get
    \begin{equation}
      \label{Gleichung.7.3.3}
      \begin{aligned}
        \left(\frac{F'(p_N)}{p_N^2}\right)' =
        -\frac{(2p_N^2+3)\sqrt{p_N^2+1}+4(2p_N^2+1)p_N
          \arsinh(p_N)}
        {2p_N^\frac{5}{2}(p_N^2+1)^\frac{3}{2}\sqrt{\sqrt{p_N^2+1}+2p_N
            \arsinh(p_N)}}<0.
      \end{aligned}
    \end{equation}
    Define
    \begin{equation}
      \label{Gleichung.7.3.4}
      \begin{aligned}
        g(s):=-\frac{\left(\frac{F'(s)}{s^2}\right)'F'(s)}{\left(\frac{F'^2(s)}{s^2}\right)^2},\ \ g(p_N)>0.
      \end{aligned}
    \end{equation}
    We easily get $g(0_+)=3/2$ and
    $\lim\limits_{s\rightarrow\infty}g(s)=\infty$. So
    $c_g:=\min\limits_{s\ge 0}g(s)>0$. Choose $\epsilon=c_g$. Then
    \begin{equation}
      \label{Gleichung.7.3.5}
      \begin{aligned}
        -\int_{\rz^3}\rd x\;\zeta_n^2\frac{F'(p_N)\Delta
          F(p_N)}{p_N^2}\le\frac{1}{c_g}\int_{\rz^3}\rd x\;|\nabla\zeta_n|^2\le
        Cn.
      \end{aligned}
    \end{equation}
    Next, we compute
    \begin{equation}
      \label{Gleichung.7.3.6}
      \begin{aligned}
        &\int_{\rz^3}\rd x\;\zeta_n^2(\sqrt{p_N^2+1}-1)<\int_{\rz^3}\rd
        x\;\zeta_n^2p_N\le\epsilon_nn^2,
      \end{aligned}
    \end{equation}
    where $\epsilon_n\rightarrow0$ as $n\rightarrow\infty$. The last
    inequality is proved by Benguria, Br{\'e}zis, and Lieb
    \cite{Benguriaetal1981}. About the external potential term, since
    $\zeta_n(x)=0$ for $|x|<n$, we have
    \begin{equation}
      \label{Gleichung.7.3.7.0}
      \begin{aligned}
        &\int_{\rz^3}\rd x\;\zeta_n^2(x)\sum_{k=1}^K\frac{8\alpha_S Z_k}{|x-R_k|}\\
        =&\int_{\rz^3}\rd x\;\zeta_n^2(x)\sum_{k=1}^K8\alpha_S Z_k\left(\frac{1}{|x|}+\frac{|x|-|x-R_k|}{|x||x-R_k|}\right)\\
        \ge&\int_{\rz^3}\rd x\;\zeta_n^2(x)\sum_{k=1}^K8\alpha_S Z_k\left(\frac{1}{|x|}-\frac{\max \{|R_k|\}}{|x|(n-\max \{|R_k|\})}\right)\\
        \ge&\left(1-\frac{c}{n}\right)8\alpha_S\sum_{k=1}^K Z_k\int_{\rz^3}\rd x\;\frac{\zeta_n^2(x)}{|x|}\\
      \end{aligned}
    \end{equation}
    for large $n$. We now address the remaining term: since $\zeta_n$
    is spherically symmetric, by a result of Lieb and Simon
    \cite[Eq. (35)]{LiebSimon1977} we have
    \begin{equation}
      \label{Gleichung.7.3.7.1}
      \begin{aligned}
        &\int_{\rz^3}\rd x\;\zeta_n^2(x)\frac{8\alpha_S}{3\pi^2}\int_{\rz^3}\rd y\;\frac{p_N^3(y)}{|x-y|}\\
        =&\int_{\rz^3}\rd x\;\zeta_n^2(x)\left[\frac{8\alpha_S}{3\pi^2}\int_{\rz^3}\rd y\;\frac{p_N^3(y)}{|x-y|}\right]\\
        =&\int_{\rz^3}\rd x\;\zeta_n^2(x)\frac{8\alpha_S}{3\pi^2}\int_{\rz^3}\rd y\;\frac{p_N^3(y)}{\max (|x|,|y|)}\\
        \le&8\alpha_S N\int_{\rz^3}\rd x\;\frac{\zeta_n^2(x)}{|x|},\\
      \end{aligned}
    \end{equation}
    where $[\varphi]$ denotes the spherical average of $\varphi$,
    i.e.,
    $$[\varphi](x)=\frac{1}{4\pi}\int_{\mathbb{S}^2}\rd
    \Omega\;\varphi(|x|\Omega).
    $$
    Thus, for large $n$, we find
    \begin{equation}
      \label{Gleichung.7.3.7}
      \begin{aligned}
        &\int_{\rz^3}\rd x\;\zeta_n^2(x)\left(-\sum_{k=1}^K\frac{8\alpha_S
            Z_k}{|x-R_k|}+\frac{8\alpha_S}{3\pi^2}\int_{\rz^3}\rd
          y\;\frac{p_N^3(y)}{|x-y|}\right)\le c(N-Z)n^2+cn.
      \end{aligned}
    \end{equation}
    Combining (\ref{Gleichung.7.3.1}), (\ref{Gleichung.7.3.5}),
    (\ref{Gleichung.7.3.6}), and (\ref{Gleichung.7.3.7}), we find
    \begin{equation}
      \epsilon_nn^2+Cn+c(N-Z)n^2\ge 0.
    \end{equation}
    As $n\rightarrow\infty$, we have that $Z\le N$ which contradicts
    the assumption that $N<Z$.
 \end{proof}
 
\section{Proof of the Upper Bound on the Maximal
  Ionization \label{Kapitel:Beweis-Ueberschuss}}
\begin{proof}[Proof of $N\leq \const Z$]
It will be convenient to express the TFW functional in terms of
\begin{equation}
\label{psi}
\chi:=F\circ p
\end{equation}
which is guided by the idea to make the dominating term in the energy
simple. Obviously $F$ is strictly monotone and $F(\rz_+)=\rz_+$. We get
\begin{multline}
  \label{echi}
     \tilde \cE^{\mathrm{TFW}}(\chi):=\cE^\mathrm{TFW}(F^{-1}\circ\chi)\\
    = {3\lambda\over8\pi^2}\int_{\rz^3}\rd x\; |\nabla\chi(x)|^2 +
    \frac{1}{8\pi^2}\int_{\rz^3}\rd x\;t^\mathrm{TF}\circ F^{-1}\circ\chi(x)\\
    -\sum\limits_{k=1}^K{\alpha_S Z_k\over3\pi^2}\int_{\rz^3}\rd x\;
    {(F^{-1}(\chi(x)))^3\over|x-R_k|}+{\alpha_S\over9\pi^4}
    D[(F^{-1}\circ\chi)^3]+ \sum_{1\leq k<l\leq K}{\alpha_S Z_kZ_l\over|R_k-R_l|}.
   \end{multline}

Suppose that $\chi$ minimizes the TFW functional on $F(P_N)$. Then it satisfies
\begin{multline}
  \label{eq:euler}
  -{3\lambda\over4\pi^2}\Delta\chi   +\frac1{8\pi^2}T'(F^{-1}(\chi))(F^{-1})'(\chi)\\
  -\sum\limits_{k=1}^K{Z_k\alpha_S \over \pi^2}{(F^{-1})'(\chi)(F^{-1}(\chi))^2\over|x-R_k|} +
  {\alpha_S\over3\pi^4}
  (F^{-1})'(\chi)(F^{-1}(\chi))^2\int_{\rz^3}\rd y{(F^{-1}(\chi(y)))^3\over|x-y|} =0.
\end{multline}
Following the spirit of Lieb \cite{Lieb1984}, we multiply by
$\chi/\phi(x)$ and integrate. The function $\phi$ is defined as
\begin{equation}
	\label{phi}
	\phi(x):=\sum\limits_{k=1}^K\frac{\kappa_k}{|x-R_k|}.
\end{equation}
The coefficients $\kappa_k>0$ will be given later. Due to our
transform the Weizs\"acker term becomes easy. We can follow Lieb's
argument \cite[Eq. (3.17)]{Lieb1984} step by step to see that it is
positive.

Since $F$ is the antiderivative of a positive expression, $F'$ is
positive and thus also the derivative of $F^{-1}$. Therefore
$$ \int_{\rz^3} \rd x\;\frac{\chi(x)}{\phi(x)} {t^\mathrm{TF}}'(F^{-1}(\chi))(F^{-1})'(\chi(x))>0$$
when $\chi$ does not vanish almost everywhere.

Thus we have the inequality
\begin{multline}
  \label{ungleichung-p}
  -\sum\limits_{k=1}^K Z_k\int_{\rz^3}\rd x\;\frac{\chi(x)}{\phi(x)|x-R_k|} (F^{-1})'(\chi(x))(F^{-1}(\chi(x)))^2\\
  +{1\over3\pi^2}\int_{\rz^3}\rd x\; \int_{\rz^3}\rd y\; { \chi(x)((F^{-1})'(\psi(x)))^2(F^{-1}(\chi(y)))^3\over{\phi(x)|x-y|}} < 0
\end{multline}
where we assumed that $\chi$ is not vanishing almost everywhere, which
we may, since otherwise the claim is trivial. Rewriting this in $p$ it
becomes
\begin{multline}
  \label{eq:ungleichung}
  -3\pi^2 \sum\limits_{k=1}^K Z_k\int_{\rz^3}\rd x\; \frac{F(p(x))/F'(p(x))p(x)^2}{\phi(x)|x-R_k|}\\
  +\int_{\rz^3}\rd x\; \int_{\rz^3}\rd y\; {F(p(x))/F'(p(x))p(x)^2p(y)^3\over{\phi(x)|x-y|}} < 0.
\end{multline}

We now analyze the minimum $a$ and maximum $b$ of the function $H$
(see \eqref{H} and \eqref{Hab}). That $b=1$ is easily seen from the
fact, that $f'(t)>0$ for all $t>0$ which implies that
$F(t):=\int_0^t\rd s\; f(s)\leq f(t)t$. That $a>0$ follows from the
bounds given in Lemma \ref{la1} of Appendix \ref{Anhang1}.  Then (\ref{eq:ungleichung}) is
equivalent to
\begin{equation}
	\label{eq:ungleichung1}
	-3\pi^2 \sum\limits_{k=1}^K Z_k\int_{\rz^3}\rd x\; \frac{H(x)p(x)^3}{\phi(x)|x-R_k|}	+\int_{\rz^3}\rd x\; \int_{\rz^3}\rd y\; {H(x)p(x)^3p(y)^3\over{\phi(x)|x-y|}} < 0.
\end{equation}

This and symmetrizing the second integrand in $x$ allows us to turn
\eqref{eq:ungleichung} into two new inequalities
\begin{equation}
  \label{eq:un1}
  \begin{split}
  &\frac a2 \int_{\rz^3}\rd x\; \int_{\rz^3}\rd y\;
  {\phi(x)^{-1} + \phi(y)^{-1} \over|x-y|}p(x)^3p(y)^3\\
  < &3\pi^2 \sum\limits_{k=1}^K Z_k\int_{\rz^3}\rd x\; g_k(x)H(x)p(x)^3
  \end{split}
\end{equation}
and
\begin{equation}
	\begin{split}
  \label{eq:un2}
  &\frac1{2} \int_{\rz^3}\rd x\; \int_{\rz^3}\rd y\;
  {\phi(x)^{-1} + \phi(y)^{-1} \over|x-y|}H(x) p(x)^3H(y)p(y)^3\\
  <& 3\pi^2 \sum\limits_{k=1}^K Z_k\int_{\rz^3}\rd x\; g_k(x)H(x)p(x)^3,
\end{split}
\end{equation}
where $g_k(x):=(\phi(x)|x-R_k|)^{-1}$. Following an idea of
Baumgartner \cite{Baumgartner1983} and using the triangular
inequality, we have
\begin{equation}
	\label{phi1}
{\phi(x)^{-1} + \phi(y)^{-1} \over|x-y|}=\sum\limits_{k=1}^K \kappa_k\frac{|x-R_k|+|y-R_k|}{|x-y|}g_k(x)g_k(y)\ge\sum\limits_{k=1}^K \kappa_k g_k(x)g_k(y).
\end{equation}

Therefore, applying (\ref{phi1}) on the left side of (\ref{eq:un1})
and (\ref{eq:un2}) yields
\begin{eqnarray}
	\label{schranke1}
	&\frac a2\sum\limits_{k=1}^K \kappa_k \left(\int_{\rz^3}\rd x\;
	g_k(x)p(x)^3\right)^2
	< 3\pi^2 \sum\limits_{k=1}^K Z_k\int_{\rz^3}\rd x\; g_k(x)H(x)p(x)^3,\\
	\label{schranke2}
	&\frac1{2}\sum\limits_{k=1}^K \kappa_k \left(\int_{\rz^3}\rd x\;
	g_k(x)H(x)p(x)^3\right)^2
	< 3\pi^2 \sum\limits_{k=1}^K Z_k\int_{\rz^3}\rd x\; g_k(x)H(x)p(x)^3.
\end{eqnarray}
Adding the two inequalities, we have
\begin{equation}
	\begin{aligned}
	\label{schranke3}
	&6\pi^2 \sum\limits_{k=1}^K Z_k\int_{\rz^3}\rd x\; g_k(x)H(x)p(x)^3\\
&>	\sum\limits_{k=1}^K \kappa_k \left(\frac a2\left(\int_{\rz^3}\rd x\;
g_k(x)p(x)^3\right)^2+\frac1{2}\left(\int_{\rz^3}\rd x\;
g_k(x)H(x)p(x)^3\right)^2\right)\\
&\ge	\sum\limits_{k=1}^K \kappa_k \sqrt{a}\int_{\rz^3}\rd x\;
g_k(x)p(x)^3\int_{\rz^3}\rd x\;
g_k(x)H(x)p(x)^3.\\
\end{aligned}
\end{equation}
This is equivalent to
\begin{equation}
		\label{schranke4}
		 \sum\limits_{k=1}^K \int_{\rz^3}\rd x\; g_k(x)H(x)p(x)^3\left(6\pi^2Z_k-\sqrt{a}\kappa_k \int_{\rz^3}\rd x\;
		g_k(x)p(x)^3\right)> 0.\\
\end{equation}
Following Lieb's \cite{Lieb1984} setting, let
\begin{equation}
	\begin{aligned}
		\label{deltanu}
		&\delta_k:=\frac{\kappa_k}{3\pi^2 N} \int_{\rz^3}\rd x\; g_k(x)p(x)^3,\\
		&\nu_k:=Z_k/Z.
	\end{aligned}
\end{equation}
Note that
\begin{equation}
		\label{deltanu1}
		\sum\limits_{k=1}^K\delta_k=\sum\limits_{k=1}^K\nu_k=1.
\end{equation}
It is proved by Lieb \cite[Appendix B]{Lieb1984} that we can choose {$\kappa_k$} such that
\begin{equation}
	\label{deltanu2}
	\delta_k=\nu_k,\ k=1,...,K.
\end{equation}
Then the left hand side of (\ref{schranke4}) becomes
\begin{equation}
	\label{schranke5}
	3\pi^2(2 Z-\sqrt{a}N)\sum\limits_{k=1}^K \int_{\rz^3}\rd x\; g_k(x)H(x)p(x)^3\delta_k> 0.\\
\end{equation}
The sum is positive, so this yields
\begin{equation}
  N< \frac {2}{\sqrt{a}} Z
\end{equation}
or, numerically $ N < 2.557211758 Z$. Recall that by definition
$a:=\inf H(\rz_+)$ (see \eqref{H} and the line below) and thus depends
only on $f$.
\end{proof}

\appendix

\section{On the semiclassical nature of the Thomas-Fermi term\label{Anhang0}}
Note that the Thomas-Fermi $\cTF$ can be rewritten to emphasize its
semiclassical nature
\begin{equation}
  \label{halbklassisch}
  \cTF[p]= 2 \int_{\rz^3}\rd x\; \int_{|\xi|<p(x)}\rd\xi\;{\rd \xi\over (2\pi)^3}
  (\sqrt{|\xi|^2+1}-1).
\end{equation}
Writing it this way is not only mere curiosity but also shows the
convexity of $\cTF$ and the bound
\begin{equation}
  \label{eq:masselos}
  {1\over 4\pi^2}\int_{\rz^3}\rd x\;\rho(x)^{4/3}\geq \cTF(\sqrt[3]{3\pi^2\rho}) \geq {1\over 4\pi^2}\int_{\rz^3}\rd x\;\rho(x)^{4/3}-\int_{\rz^3}\rd x\;\rho(x).
\end{equation}

This also allows to read off its derivative:
\begin{equation}
  \label{ableitung-p}
  (d\cTF)(p)(\eta) = \frac1{\pi^2}\int_{\rz^3}\rd x\; p(x)^2(\sqrt{p(x)^2+1}-1)\eta(x)
\end{equation}
with $\eta\in L^4(\rz^3)$. Obviously, $(d\cTF)(p)$ can be identified
with an element in $L^{4/3}(\rz^3)$.

Writing the TF functional in $\rho$ yields
\begin{equation}
  \label{ableitung-rho}
  (d\cTF\circ\sqrt[3]{3\pi^2\cdot})(\rho)(\eta)
  = \int_{\rz^3}\rd x\;(\sqrt{(3\pi^2\rho(x))^{2/3}+1}-1)\eta(x).
\end{equation}
for $\eta\in L^{4/3}(\rz^3)$. Obviously this derivative can be
identified with an element in $L^4(\rz^3)$.

\section{Bound on the function $F$, its derivative, and the
  Thomas-Fermi energy\label{Anhang1}}
\begin{lemma}
  \label{la1}
  For all $s,t>0$ 
  \begin{eqnarray}
    \label{a1}
    F(s)&>& \tilde F(s) := s\sqrt{\arsinh(s)}/2,\\
    \label{a2}
    f(s)&>&\tilde f(s):=\sqrt{\arsinh(s)},\\
    \label{a3}
    \tf(s)&>& \tilde\tf(s):={s^4\over1+1/(\tfrac45s)},\\
\label{G}
    \tilde F(s)&>& G(s):= {s^{3/2}\over2\sqrt{1+s}},\\
\label{Ginverse}
G^{-1}(t)&<& 2t^{2/3}\sqrt[3]{t + 1}. 
  \end{eqnarray}
Moreover, $\tilde f$ is monotone increasing and concave.
\end{lemma}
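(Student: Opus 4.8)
The plan is to establish the six assertions essentially independently of one another, but in a more economical order than they are stated, doing the cheap ones first. I would begin with the monotonicity and concavity of $\tilde f(s)=\sqrt{\arsinh(s)}$, since the proof of \eqref{a1} will rest on it. A direct differentiation gives $\tilde f'(s)=\bigl(2\sqrt{\arsinh(s)(1+s^2)}\bigr)^{-1}>0$, so $\tilde f$ is strictly increasing; and since $s\mapsto\arsinh(s)(1+s^2)$ is a product of two positive increasing functions, hence increasing, $\tilde f'$ is decreasing on $\rz_+$, so $\tilde f$ is concave there. As $\tilde f$ is moreover continuous at $0$ with $\tilde f(0)=0$, the chord inequality $\tilde f(u)\ge (u/s)\,\tilde f(s)$ holds for $0\le u\le s$, which is the form I will use.

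Next \eqref{a2}. Working with squares, one has $f(s)^2-\tilde f(s)^2=\frac{s}{\sqrt{s^2+1}}-\arsinh(s)\,\frac{1-s^2}{1+s^2}$. For $s\ge 1$ the subtracted term is nonpositive and the claim is immediate; for $0<s<1$ I would use $\arsinh(s)<s$ (valid because $\arsinh'=(1+s^2)^{-1/2}\le 1$ and both sides vanish at $0$) to reduce the claim to $\frac{1}{\sqrt{1+s^2}}>\frac{1-s^2}{1+s^2}$, i.e. to $\sqrt{1+s^2}>1-s^2$, which is obvious. Then \eqref{a1} follows immediately: by \eqref{a2}, $F(s)=\int_0^s f>\int_0^s\tilde f$, and by the chord inequality $\int_0^s\tilde f(u)\,\rd u\ge \frac{s}{2}\tilde f(s)=\tilde F(s)$.

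For \eqref{G} I would cancel the common factor $s/2$ and reduce $\tilde F(s)>G(s)$ to $\arsinh(s)>\frac{s}{1+s}$; both sides vanish at $0$, and comparing derivatives reduces this to $(1+s)^4>1+s^2$, true for $s>0$. For \eqref{Ginverse}, observe first that $G(s)^2=s^3/(4(1+s))$ is strictly increasing with range $\rz_+$, so $G$ is a strictly increasing bijection of $\rz_+$, and it suffices to prove $t<G(s_0)$ with $s_0:=2t^{2/3}(t+1)^{1/3}$. Since $s_0^3=8t^2(t+1)$, we get $G(s_0)^2=\frac{2t^2(t+1)}{1+s_0}$, so $G(s_0)^2>t^2$ is equivalent to $2t+1>2t^{2/3}(t+1)^{1/3}$, which after cubing becomes $(2t+1)^3>8t^2(t+1)$, i.e. $4t^2+6t+1>0$.

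The main work is \eqref{a3}. The key observation is the identity $\tf'(s)=8s^2(\sqrt{s^2+1}-1)=\frac{8s^4}{\sqrt{s^2+1}+1}$ (consistent with the derivative of $\cTF$ recorded in Appendix~\ref{Anhang0}), while $\tilde\tf'(s)=\frac{4s^4(16s+25)}{(4s+5)^2}$; since $\tf$ and $\tilde\tf$ both vanish at $0$, it suffices to show $\tf'(s)>\tilde\tf'(s)$ for $s>0$ and integrate. Dividing by $4s^4$ and clearing the positive denominators, this reduces to $32s^2+64s+25>(16s+25)\sqrt{s^2+1}$; both sides are positive, so I square, and the difference of the squares works out to $768s^4+3296s^3+4815s^2+2400s$, manifestly positive for $s>0$. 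The only mildly delicate point is the derivative identity for $\tf$, a short but slightly tedious computation; everything afterwards is a polynomial inequality. I expect this last item to be the principal (though still routine) obstacle.
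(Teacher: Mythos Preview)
Your proposal is correct and follows essentially the same route as the paper: \eqref{a2} via $s>\arsinh(s)$, \eqref{a1} from \eqref{a2} plus concavity of $\tilde f$, \eqref{G} reduced to $\arsinh(s)>s/(1+s)$, \eqref{Ginverse} by evaluating $G$ at $2t^{2/3}(t+1)^{1/3}$, and \eqref{a3} by noting both sides vanish at $0$ and comparing derivatives. The only cosmetic differences are that you deduce \eqref{a1} from the chord inequality $\tilde f(u)\ge(u/s)\tilde f(s)$ while the paper uses the symmetrization $\int_0^{t/2}(\tilde f(u)+\tilde f(t-u))\,\rd u\ge (t/2)\tilde f(t)$, and you carry the algebra for \eqref{Ginverse} one step further (the cubing to $4t^2+6t+1>0$) than the paper does.
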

\begin{proof}
  The monotony and concavity of $\tilde f(s)$ follows immediately by
  taking the derivatives
  $$\frac{\rd}{\rd s}\tilde f(s)=\frac{1}{2\sqrt{(s^2+1)\arsinh(s)}}>0$$
  and
  $$\frac{\rd^2}{\rd s^2}\tilde f(s)
  =-\frac{s}{2\sqrt{(s^2+1)^3\arsinh(s)}}-\frac{1}{4(s^2+1)\arsinh(s)^{3/2}}<0.$$
 
 The inequality \eqref{a2} is equivalent to
  $${s\over\sqrt{1+s^2}} +{2s^2\over1+s^2}\arsinh(s)> \arsinh(s)$$
  or
  $$s\sqrt{1+s^2} > (1-s^2)\arsinh(s).$$
  This, however, is clear, since $s>\arsinh(s)$.

  The claim \eqref{a1} follows from \eqref{a2} and the concavity of
  $\tilde f(s)$. We have
 \begin{multline}\notag
   F(t)=\int_0^t\rd s\; f(s)>\int_0^t\rd s\;\tilde f(s)=\int_0^{t/2}\rd s\left(\tilde f(s)+\tilde f(t-s)\right)\\
   >\int_0^{t/2}\rd s\left(\tilde f(0)+\tilde
     f(t)\right)=t\sqrt{\arsinh(t)}/2.
 \end{multline}

 Next we treat \eqref{a3}. To this end we remark that
 $\tf(0)=\tilde\tf(0)=0$. Moreover, the derivative of the difference
 is
  \begin{multline}\notag
    \frac{\rd}{\rd s}\left(\tf(s)-\tilde\tf(s)\right)\\
    =\frac{128\left(\left(p+\frac{5}{4}\right)^2(p^2+1)-\left(\frac{1}{2}p^3+\frac{57}{32}p^2+\frac{5}{2}p+\frac{25}{16}\right)\sqrt{p^2+1}\right)p^2}{(4p+5)^2\sqrt{p^2+1}}>0,
  \end{multline}
  which proves the result.
  
  The inequality \eqref{G} is equivalent to the fact
  $\arsinh(s)>\frac{s}{1+s}$. Using the monotony of $G(s)$ and
  $G(2t^{2/3}\sqrt[3]{t +
    1})=\frac{2^{3/2}t(t+1)^{1/2}}{2\sqrt{1+2t^{2/3}\sqrt[3]{t +
        1}}}>t=G(G^{-1}(t))$, the inequality \eqref{Ginverse} follows.
\end{proof}

\section*{Acknowledgments}
H.C. acknowledges support by the China Scholarship Council through the
LMU-CSC Scholarship Program. H.S. acknowledges partial support by the
Deutsche Forschungsgemeinschaft, grant SI 348/15-1 and
EXC-2111-390814868. Both authors acknowledge support of the Institute
for Mathematical Sciences of Singapore University through their
program ``Density Functionals for Many-Particle Systems: Mathematical
Theory and Physical Applications of Effective Equations''.


\begin{thebibliography}{10}

\bibitem{Baumgartner1983}
B.~Baumgartner.
\newblock On the degree of ionization in the {TFW} theory.
\newblock {\em Letters in Mathematical Physics}, 7(5):439--441, Sep 1983.

\bibitem{Benguriaetal1981}
Rafael Benguria, Ha\"{\i}m Br\'{e}zis, and Elliott~H. Lieb.
\newblock The {T}homas-{F}ermi-von {W}eizs\"{a}cker theory of atoms and
  molecules.
\newblock {\em Comm. Math. Phys.}, 79(2):167--180, 1981.

\bibitem{Benguriaetal1992}
Rafael Benguria, Stefan Hoops, and Heinz Siedentop.
\newblock Bounds on the excess charge and the ionization energy for the
  {H}ellmann-{W}eizs{\"a}cker model.
\newblock {\em Annales de l'Institut Henri Poincar{\'e}}, 57(1):47--65, 1992.

\bibitem{BenguriaLieb1985}
Rafael Benguria and Elliott~H.\ Lieb.
\newblock The most negative ion in the {T}homas-{F}ermi-von {W}eizs{\"a}cker
  theory of atoms and molecules.
\newblock {\em J.\ Phys.\ B.}, 18:1045--1059, 1985.

\bibitem{Benguriaetal2008}
Rafael~D Benguria, Michael Loss, and Heinz Siedentop.
\newblock Stability of atoms and molecules in an ultrarelativistic
  {T}homas-{F}ermi-{W}eizs{\"a}cker model.
\newblock {\em Journal of Mathematical Physics}, 49(1):012302, 2008.

\bibitem{Chen2019}
Hongshuo Chen.
\newblock {\em On the Excess Charge Problem in Relativistic Quantum Mechanics}.
\newblock PhD thesis, Ludwig-Maximilians-Universit{\"a}t M{\"u}nchen, July
  2019.

\bibitem{EngelDreizler1987}
E.~Engel and R.~M. Dreizler.
\newblock Field-theoretical approach to a relativistic
  {T}homas-{F}ermi-{D}irac-{W}eizs\"acker model.
\newblock {\em Phys. Rev. A}, 35:3607--3618, May 1987.

\bibitem{EngelDreizler1988}
E.~Engel and R.~M. Dreizler.
\newblock Solution of the relativistic {T}homas-{F}ermi-{D}irac-{W}eizs\"acker
  model for the case of neutral atoms and positive ions.
\newblock {\em Phys. Rev. A}, 38:3909--3917, Oct 1988.

\bibitem{Fermi1927}
E.\ Fermi.
\newblock Un metodo statistico per la determinazione di alcune propriet{\'a}
  dell'atomo.
\newblock {\em Atti della Reale Accademia Nazionale dei Lincei, Rendiconti,
  Classe di Scienze Fisiche, Matematiche e Naturali}, 6(12):602--607, 1927.

\bibitem{Fermi1928}
E.~{Fermi}.
\newblock {Eine statistische Methode zur Bestimmung einiger Eigenschaften des
  Atoms und ihre Anwendung auf die Theorie des periodischen Systems der
  Elemente.}
\newblock {\em {Z. Phys.}}, 48:73--79, 1928.

\bibitem{Franketal2019P}
Rupert~L. Frank, Konstantin Merz, Heinz Siedentop, and Barry Simon.
\newblock Proof of the strong {S}cott conjecture for {C}handrasekhar atoms.
\newblock {\em Pure and Applied Functional Analysis, preprint
  arXiv:1907.04894}, In press, 2019.

\bibitem{Franketal2008}
Rupert~L. Frank, Heinz Siedentop, and Simone Warzel.
\newblock The ground state energy of heavy atoms: Relativistic lowering of the
  leading energy correction.
\newblock {\em Comm. Math. Phys.}, 278(2):549--566, 2008.

\bibitem{Franketal2009}
Rupert~L. Frank, Heinz Siedentop, and Simone Warzel.
\newblock The energy of heavy atoms according to {B}rown and {R}avenhall: the
  {S}cott correction.
\newblock {\em Doc. Math.}, 14:463--516, 2009.

\bibitem{Gombas1949}
P.~Gomb{\'a}s.
\newblock {\em Die statistische {T}heorie des {A}toms und ihre {A}nwendungen}.
\newblock Springer-Verlag, Wien, 1 edition, 1949.

\bibitem{HandrekSiedentop2013}
Michael Handrek and Heinz Siedentop.
\newblock On the maximal excess charge of the {C}handrasekhar-{C}oulomb
  {H}amiltonian in two dimension.
\newblock {\em Lett. Math. Phys.}, 103(8):843--849, 2013.

\bibitem{Hodges1973}
C.~H. Hodges.
\newblock Quantum corrections to the {T}homas-{F}ermi approximation - the
  {K}irzhnits method.
\newblock {\em Canadian Journal of Physics}, 51(13):1428--1437, 1973.

\bibitem{Kirzhnits1957}
D.~A. Kirzhnits.
\newblock Quantum corrections to the {T}homas-{F}ermi equation.
\newblock {\em Soviet Physics JETP}, 5(1):64--71, August 1957.

\bibitem{Lenz1932}
W.~Lenz.
\newblock {\"U}ber die {A}nwendbarkeit der statistischen {M}ethode auf
  {I}onengitter.
\newblock {\em Z. Phys.}, 77:713--721, 1932.

\bibitem{Lieb1984}
Elliott~H. Lieb.
\newblock Bound on the maximum negative ionization of atoms and molecules.
\newblock {\em Phys. Rev. A}, 29(6):3018--3028, June 1984.

\bibitem{LiebLiberman1982}
Elliott~H. Lieb and David~A. Liberman.
\newblock Numerical calculation of the {T}homas-{F}ermi-von {W}eizs\"acker
  function for an infinite atom without electron repulsion.
\newblock Technical Report LA-9186-MS, Los Alamos National Laboratory, Los
  Alamos, New Mexico, April 1982.

\bibitem{LiebLoss1996}
Elliott~H.\ Lieb and Michael Loss.
\newblock {\em Analysis}.
\newblock Number~14 in Graduate Studies in Mathematics. American Mathematical
  Society, Providence, 1 edition, 1996.

\bibitem{LiebSimon1977}
Elliott~H. Lieb and Barry Simon.
\newblock The {T}homas-{F}ermi theory of atoms, molecules and solids.
\newblock {\em Advances in Math.}, 23(1):22--116, 1977.

\bibitem{Simon1979}
Barry Simon.
\newblock {\em Functional Integration and Quantum Physics}.
\newblock Academic Press Inc. [Harcourt Brace Jovanovich Publishers], New York,
  1979.

\bibitem{Solovejetal2008}
Jan~Philip Solovej, Thomas {\O stergaard S{\o}rensen}, and Wolfgang~L. Spitzer.
\newblock The relativistic {S}cott correction for atoms and molecules.
\newblock {\em Commun. Pure Appl. Math.}, 63:39--118, January 2010.

\bibitem{Teller1962}
Edward Teller.
\newblock On the stability of molecules in the {T}homas-{F}ermi theory.
\newblock {\em Rev. Mod. Phys.}, 34(4):627--631, October 1962.

\bibitem{Thomas1927}
L.~H.\ Thomas.
\newblock The calculation of atomic fields.
\newblock {\em Proc.\ Camb.\ Phil.\ Soc.}, 23:542--548, 1927.

\bibitem{YoneiTomishima1965}
Katsumi Yonei and Yasuo Tomishima.
\newblock On the {W}eizs{\"a}cker correction to the {T}homas-{F}ermi theory of
  the atom.
\newblock {\em Journal of the Physical Society of Japan}, 20(6):1051--1057,
  1965.

\end{thebibliography}
\def\cprime{$'$}

\end{document}